\newcommand{\hide}[1]{\ifthenelse{\boolean{false}}{#1}{}}
\newtheorem{theorem}{{\bf Theorem}}
\newtheorem{lemma}{{\bf Lemma}}
\newtheorem{proposition}{Proposition}
\newtheorem{corollary}{{\bf Corollary}}
\newcommand{\qed}{\nobreak \ifvmode \relax \else
      \ifdim\lastskip<1.5em \hskip-\lastskip
      \hskip1.5em plus0em minus0.5em \fi \nobreak
      \vrule height0.75em width0.5em depth0.25em\fi}
\newcommand{\beq}{\begin{equation}}
\newcommand{\eeq}{\end{equation}}
\newcommand{\bdisp}{\begin{displaymath}}
\newcommand{\edisp}{\end{displaymath}}
\newcommand{\beqarr}{\begin{eqnarray}}
\newcommand{\eeqarr}{\end{eqnarray}}
\newcommand{\bmlt}{\begin{multline}}
\newcommand{\emlt}{\end{multline}}
\newcommand{\beqarrn}{\begin{eqnarray*}}
\newcommand{\eeqarrn}{\end{eqnarray*}}
\newcommand{\barr}{\begin{array}}
\newcommand{\earr}{\end{array}}
\newcommand{\benum}{\begin{enumerate}}
\newcommand{\eenum}{\end{enumerate}}
\newcommand{\bit}{\begin{itemize}}
\newcommand{\eit}{\end{itemize}}
\newcommand{\bc}{\begin{center}}
\newcommand{\ec}{\end{center}}
\newcommand{\bdes}{\begin{description}}
\newcommand{\edes}{\end{description}}
\newcommand{\bfig}{\begin{figure}}
\newcommand{\efig}{\end{figure}}
\newcommand{\bemq}{\begin{quote} \begin{em}}
\newcommand{\eemq}{\end{em} \end{quote}}
\newcommand{\bmp}{\begin{minipage}}
\newcommand{\emp}{\end{minipage}}
\newcommand{\eqn}[1]{(\ref{#1})}
\newcommand{\brac}[1]{\left({#1}\right)}
\newcommand{\floor}[1]{\left\lfloor{#1}\right\rfloor}
\newcommand{\ceil}[1]{\left\lceil {#1} \right\rceil}
\newcommand{\kth}{^{{\mathrm{th}}}}
\newcommand{\define}{\triangleq}
\newcommand{\tendsto}{\to}
\newcommand{\ie}{{\it i.e.}}
\newcommand{\iid}{{i.i.d.}}
\newcommand{\expect}[1]{{\bf E}\left[{#1}\right]}
\newcommand{\prob}[1]{\text{Pr}\brac{#1}}
\newcommand{\bsp}{\begin{slide*}}
\newcommand{\esp}{\end{slide*}}
\newcommand{\bsl}{\begin{slide}}
\newcommand{\esl}{\end{slide}}
\newcommand{\EX}[1]{\expect{X_{#1}}}
\newcommand{\EXQ}[2]{\expect{X_{#1}^{(#2)}}}
\newcommand{\aveslotsQ}[2]{m_{#1}^{(#2)}}
\newcommand{\aveslots}[1]{m_{#1}}
\newcommand{\success}{{\cal S}}
\newcommand{\PI}[1]{P_{#1}}
\newcommand{\PLi}[1]{P_{L,#1}}
\newcommand{\PRi}[1]{P_{R,#1}}
\newcommand{\hsplit}[1]{\text{split}\brac{#1}}
\begin{document}

\title{Splitting Algorithms for Fast Relay Selection: Generalizations, Analysis, and a Unified View}

\author{Virag Shah, {\it Student Member, IEEE}, Neelesh B.  Mehta, {\it Senior Member, IEEE}, Raymond Yim, {\it Member, IEEE}
\thanks{V.\ Shah and N.\ B.\ Mehta are with the Electrical
    Communication Engineering Dept. at the Indian Institute of Science
    (IISc), Bangalore, India.  R.\ Yim is with the Mitsubishi Electric
    Research Labs (MERL), Cambridge, MA, USA.}
\thanks{Emails: \{\tt
    virag4u@gmail.com, nbmehta@ece.iisc.ernet.in, yim@merl.com\}.}
\thanks{A portion of this work has appeared in the IEEE International Conference on Communications (ICC) 2009.} 
 }

 \maketitle

\begin{abstract}
  
  Relay selection for cooperative communications promises significant
  performance improvements, and is, therefore, attracting considerable
  attention. While several criteria have been proposed for selecting
  one or more relays, distributed mechanisms that perform the
  selection have received relatively less attention. In this paper, we
  develop a novel, yet simple, asymptotic analysis of a
  splitting-based multiple access selection algorithm to find the
  single best relay. The analysis leads to simpler and alternate
  expressions for the average number of slots required to find the
  best user. By introducing a new `contention load' parameter, the
  analysis shows that the parameter settings used in the existing
  literature can be improved upon. New and simple bounds are also
  derived. Furthermore, we propose a new algorithm that addresses the
  general problem of selecting the best $Q \ge 1$ relays, and analyze
  and optimize it.  Even for a large number of relays, the algorithm
  selects the best two relays within 4.406 slots and the best three
  within 6.491 slots, on average. We also propose a new and simple
  scheme for the practically relevant case of discrete metrics.
  Altogether, our results develop a unifying perspective about the
  general problem of distributed selection in cooperative systems and
  several other multi-node systems.
\end{abstract}

\begin{keywords}
Relays, cooperative communications, selection, multiple access, splitting.
\end{keywords}

\IEEEpeerreviewmaketitle

\section{Introduction}

Selection mechanisms arise in many wireless communication schemes that
use most suitable candidates from among a set of many candidates. A
pertinent example is a cooperative communication system that exploits
spatial diversity by selecting the best relay(s) to forward a message
from a source to a destination. Selection makes cooperation practical
because it mitigates the tight synchronization that is required
among many geographically distributed cooperating
relays~\cite{bletsas_jsac_2006,lin_Globecom_2005,nam_CISS_2008,yang_VT_2008,luo_VTC_2005,yang_CISS_2006,madan_TWC_2008,beres_TWC_2008,michalopoulos_2008_TWC,lo_VT_2009_SubsetSelection,lo_VT_2009_HARQ}.
Another example is a cellular system that schedules in a 
proportional fair manner to the best mobile station based on 
the average data rate and the
current state of the channel between the base station and the
mobiles~\cite{tse_2005}. QoS requirements can also be incorporated in
the selection metric, as is done, for example, in a wireless local
area network (WLAN). In sensor networks, node selection is known to
improve network lifetime.

Several relay selection criteria have been proposed and analyzed in
the literature.  For example, \cite{bletsas_jsac_2006} showed that for
a decode-and-forward cooperation scheme, best relay selection achieves
full diversity.  In~\cite{nam_CISS_2008}, criteria for selecting
multiple relays were proposed to minimize data transmission time.
In~\cite{lo_VT_2009_SubsetSelection} relay subset selection was
considered for rate maximization.  In~\cite{yang_CISS_2006}, best two
relay selection was used to improve the diversity-multiplexing
tradeoff of an amplify and forward protocol.
In~\cite{madan_TWC_2008}, multiple relay selection was optimized for
cooperative beamforming. Multiple relay selection for wireless network
coding was considered in~\cite{ding_TWC_2009}.

The design of the mechanism that physically selects -- as per the
selection or suitability criteria -- the best relay or, in general,
the $Q$ best relays is, therefore, an important problem. Depending on
the transmission scheme, the suitability metric can be a function of
both the source-relay and relay-destination channel gains or just the
relay-destination or source-relay channel gains. It is desirable that
the mechanism be distributed since, typically, the knowledge of the
metric is initially available only locally at the relay.  For example,
a centralized polling mechanism for selection is undesirable as the
time to select increases linearly with the number of available relays.
To this end, a decentralized back-off timer-based scheme for single
best relay selection were proposed in~\cite{bletsas_jsac_2006}. In it,
each node transmits a short message when its timer expires. Making the
timer value inversely proportional to the metric ensures that the
first node that the sink hears from is the best node.  A distributed
single relay selection algorithm was also proposed
in~\cite{krikidis_CL_2007} to minimize the bit error rate.
In~\cite{liu_2006_TWC}, the source uses handshake messages from relays
to track the rate that each candidate relay can support.

An alternate approach considers a time-slotted multiple access
contention based algorithm in which each active node locally decides
whether or not to transmit in a certain time slot.  Recently,
variations based on splitting algorithms, which were extensively
researched two decades ago for multiple access control~\cite[Chp.\ 
4]{bertsekas_gallager}, have been proposed for single relay
selection~\cite{qin_infocomm_2004,yim_TWC_2009_VPMAS}. In each step of
the splitting-based selection algorithm proposed
in~\cite{qin_infocomm_2004}, only those nodes whose metrics lie
between two thresholds transmit. The nodes update the thresholds
(independently) in each slot based on the outcome of the previous slot
fed back by the sink.\footnote{We use the generic term `sink' to refer
  to the source or access point or base station, as the case may be,
  that needs to select the best node/relay.} It was shown
in~\cite{qin_infocomm_2004} for continuous metrics that the best node
can be found, on average, within at most 2.507 slots even for an
infinite number of nodes.  This result was obtained by deriving an
upper bound on the average number of slots when the number of relays
tends to infinity. However, the analysis was quite involved and the
upper bound was in the form of an infinite series.

While distributed selection mechanisms have proposed for single relay
selection, several questions remain open. For example, developing a
comprehensive analysis of the splitting mechanism is an important
problem. A natural question that such an analysis will answer is how
to optimally choose the thresholds to improve the speed of selection.
In~\cite{qin_infocomm_2004}, the thresholds are initially set greedily
so to maximize the probability of success. As we show, this is not
optimal. Furthermore, efficient mechanisms are yet to be developed for
multiple relay selection. The only option known currently is to run
the single relay selection algorithm multiple times, which, as we show
in this paper, is inefficient. Finally, the mechanisms above assume
that the selection metric is continuous, and exploit the fact that,
with probability 1, no two relays have the same metric. The mechanism
catastrophically breaks down when the metrics are discrete, which can
often occur in practice.  This occurs, for example, when the
estimation inaccuracy renders higher resolution representations
unnecessary, or when quantized metrics for feedback or QoS are
considered~\cite{lo_VT_2009_HARQ,choumas_LANMAN}.

This paper thoroughly examines splitting-based selection algorithms
for both continuous and discrete metrics, and makes the following
significant contributions:
\begin{itemize}
  
\item {\em Analysis of single relay selection:} The paper develops a
  novel and considerably simpler {\em exact} asymptotic analysis for a
  general version of the splitting algorithm.  It achieves this by
  developing a different Poisson process interpretation of the metric
  distribution, which has not been used before to the best of our
  knowledge. Furthermore, it also derives a new convex and simple
  upper bound for the average number of slots required to select the
  best relay.

\item {\em Optimization of single relay selection:} The paper
  analytically determines the optimal performance of the splitting
  algorithm. It also rigorously shows that the greedy parameter choice
  of~\cite{qin_infocomm_2004} is sub-optimal, but is still very good.
  
\item {\em An alternate Markovian analysis:} The Poisson process
  interpretation also leads to an alternate and novel Markovian
  analysis, which among other things yields a new exact asymptotic
  expression for the average number of slots. As we shall see, while
  the two new expressions derived in this paper are equivalent, they
  exhibit different behaviors when truncated.
  
\item {\em New mechanism for multiple relay selection, including its
    analysis and optimization:} The paper proposes a novel scalable,
  fast, and decentralized algorithm for the general problem of
  selecting not just the single best but the best $Q\geq1$ relays. To
  the best of our knowledge, this is the fastest family of $Q$ relay
  selection algorithms proposed to date. We develop an asymptotic
  analysis of the general $Q$ relay selection algorithm, and determine
  its optimal parameters. We show that as $Q$ increases, the greedy
  parameter choice becomes more suboptimal. In effect, as $Q$
  increases, the optimal splitting algorithm prefers that more nodes
  collide since it is faster to resolve a collision than avoid one.

\item {\em Unifying perspective:} The paper shows that the optimized
  best relay selection algorithm, the proposed multiple relay
  selection mechanism, and Gallager's First Come First Serve (FCFS)
  multiple access control algorithm~\cite{bertsekas_gallager} are intimately
  related.
  
\item {\em New scalable algorithm for discrete metrics:} Finally, the
  paper proposes a novel, scalable, and an intuitive distributed
  scheme called {\em Proportional Expansion}, which enables the single
  and multiple relay selection algorithms to be applied to the
  practical case of discrete metrics.

\end{itemize}

The rest of the paper is organized as follows.  The analysis and
results for single best node selection is developed in
Sec.~\ref{sec:Single Relay selection}.  The new algorithm for $Q\geq
1$ node selection is proposed, analyzed, and simulated in
Sec.~\ref{sec:Q-Relay Selection Algorithm}.  We conclude in
Sec.~\ref{sec:conclusions}.  Several mathematical proofs are relegated
to the Appendix.

\section{Single Relay selection}
\label{sec:Single Relay selection}

\subsection{System Setup}
\label{subsec:single relay algo}

Consider a time-slotted system with $n$ active nodes and a sink, as
shown in Fig.~\ref{fig:system}. Each node $i$ has a suitability metric
$u_i$, which is known only to that specific node.  In this section,
the goal is to select the node with the highest metric.  The metrics
are continuous and i.i.d.\ with complementary CDF (CCDF) denoted by
$F_c(u) = \Pr(u_{i} > u)$.  Therefore, the $F_c(.)$ is monotonically
decreasing and invertible. (The discrete metric case, where this is
not so, is tackled in Sec.~\ref{sec:discrete_metric}.)

\begin{figure}[p]
\centering
\input{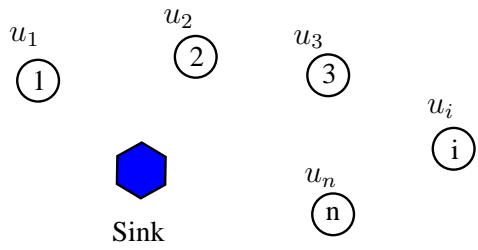}
\caption{A relay selection system consisting of a sink and $n$ relays/nodes, with a node $i$ possessing a suitability metric $u_{i}$.}
\label{fig:system}
\end{figure}

\subsection{Splitting Algorithm: Brief Review and Notation}

We now formally define the splitting algorithm for single relay
selection. To keep the treatment concise, we first define the state
variables maintained by the algorithm and their initialization.
Thereafter, we describe how the algorithm controls the transmissions
of the nodes, how the sink generates feedback based on these
transmissions, and how the state variables get autonomously updated
based on the feedback.

{\em Definitions:} The generalized best relay selection algorithm is
specified using three variables $H_L(k)$, $H_H(k)$ and $H_{\min}(k)$;
the notation being consistent with that in~\cite{qin_infocomm_2004}.
$H_L(k)$ and $H_H(k)$ are the lower and upper metric thresholds such
that a node $i$ transmits at time slot $k$ only if its metric $u_i$
satisfies $H_L(k) < u_i < H_H(k)$.  $H_{\min}(k)$ tracks the largest
value of the metric known up to slot $k$ above which the best metric
surely lies.

{\em Initialization:} In the first slot ($k = 1$), the parameters are
initialized as follows: $H_L(1) = F_{c}^{-1}(p_e/n)$, $H_H(1) =
\infty$, and $H_{min}(1) = 0$. Here, $p_e$ is a system parameter, and
shall henceforth be referred to as the {\em Contention load parameter}.

{\em Transmission rule:} At the beginning of each slot, each node
locally decides to transmit. As mentioned, it transmits if and only if
its metric lies between $H_L(k)$ and $H_H(k)$.

{\em Feedback generation:} At the end of each slot, the sink
broadcasts to all nodes a two-bit feedback: (i)~$0$ if the slot was idle (when no node
transmitted), (ii)~$1$ if the outcome was a success (when exactly one
node transmitted), or (iii)~$e$ if the outcome was a collision (when
multiple nodes transmitted).\footnote{The sink can distinguish between
  these outcomes using, for example, the strength of the total
  received power~\cite{yim_2008_ICC}.}

{\em Response to feedback:} We first define the split
function\footnote{The split function makes sure that on an average
  half of the nodes involved in the last collision transmit in the
  next slot.  Splitting can be made faster as was done
  in~\cite{cohen_1995_Cybernetics}. However, doing so requires each
  node to numerically calculate thresholds in each slot that are
  solutions of degree $n-1$ equations. Also, the improvement due to
  this scheme turns out to be less than $0.5\%$.}  to facilitate
description: Let $\hsplit{a,b} =
F_c^{-1}\brac{\frac{F_c(a)+F_c(b)}{2}}.$ Then, depending on the
feedback, the following possibilities occur:
\begin{enumerate}
\item If the feedback (of the $k\kth$ slot) is an idle ($0$) and no collisions has occurred so far, then set $H_H(k+1) = H_L(k)$,
  $H_L(k+1) = F_{c}^{-1}(\frac{k+1}{n}p_e)$, and  \mbox{$H_{\min}(k+1) =  0$}.

\item If the feedback is a collision ($e$), then set \mbox{$H_L(k+1) =
    \hsplit{H_L(k),H_H(k)}$}, $H_H(k+1) = H_H(k)$, and
  $H_{\min}(k+1) = H_L(k)$

\item If the feedback is an idle ($0$) and a collision has occurred in
  the past, then set $H_H(k+1) = H_L(k)$, $H_L(k+1) =
  \hsplit{H_{\min}(k), H_L(k)}$, and $H_{\min}(k+1) = H_{\min}(k)$.
\end{enumerate}
{\em Termination:} The algorithm terminates when the outcome is a
success ($1$).

We shall call the durations before and after the first non-idle slot
as the {\it idle} and {\it collision} phases, respectively.  Thus, the
contention load parameter, $p_e$, is the average number of users that
transmit in a slot in the idle phase. The Qin-Berry
algorithm~\cite{qin_infocomm_2004} uses $p_e = 1$, which is the value
that maximizes the probability of a success outcome in an idle phase
slot.

\subsection{Main Analytical Results}

The floor and ceil operations are denoted by $\floor{.}$ and
$\ceil{.}$, respectively. $\expect{Z}$ will denote the expected value
of a random variable $Z$.

We now develop a new analysis of the average time taken,
$\aveslots{n}(p_e)$, by the splitting algorithm to select the single
best relay. The following lemma gives an exact expression for
$\aveslots{n}(p_e)$.
\begin{lemma}
\label{average_no_of_slots_finite_n}
Let $X_k$ be the number of slots required to resolve a collision among
$k$ nodes.  Let $q = \ceil{\frac{n}{p_e}}-1$ denote the idle phase
duration in slots. The average number of slots, $\aveslots{n}(p_e)$,
required to find the best node is given by,
\begin{equation} \label{eq:finite_nodes}
 \aveslots{n}(p_e) =\sum_{i=1}^q\sum_{k=1}^n  \binom{n}{k}  \left(\frac{p_e}{n}\right)^k
  \left(1-\frac{ip_e}{n}\right)^{n-k}  \left(\EX{k}+i\right)
  + \left(1-\frac{qp_e}{n}\right)^n (\EX{n} + q+1),
\end{equation}
where $\EX{k}$ follows the recursion $\EX{k} =
\frac{0.5^k\left(\sum_{l=2}^{k-1}
    \binom{k}{l}\EX{l}\right)+1}{1-0.5^{k-1}}$, for all $k \ge 2$, and
$\EX{1} = 0$.
\end{lemma}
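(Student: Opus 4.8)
The plan is to decompose a run of the algorithm into the \emph{idle phase} and the subsequent \emph{collision-resolution phase}, and to observe that, conditioned on how the idle phase ends, the cost of the second phase depends only on the number of nodes that first collide. The bridge between the two is the probability integral transform: since $F_c$ is continuous, strictly decreasing and invertible, the variables $v_i\define F_c(u_i)$ are i.i.d.\ uniform on $(0,1)$, a node transmits in slot $k$ exactly when $v_i$ lies in the image $\brac{F_c(H_H(k)),F_c(H_L(k))}$ of the current window, and the split function bisects that image in the $v$-coordinate. Chasing the initialization and update rules through this change of variable shows that, during the idle phase, slot $j$ ($j=1,\dots,q$) tests the interval $I_j=[(j-1)p_e/n,\,jp_e/n)$ of measure $p_e/n$; and since $q=\ceil{n/p_e}-1$, the interval probed in slot $q+1$ has left endpoint $qp_e/n<1$ and right endpoint $\ge1$, hence contains all $n$ nodes whenever $I_1,\dots,I_q$ are empty.

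Next I would analyze the idle phase by a three-cell multinomial count. The first non-idle slot is slot $i$ (for $i\le q$) with exactly $k$ transmitters iff the $v_j$'s place $k$ points in $I_i$, none in $\bigcup_{j<i}I_j$, and the remaining $n-k$ in the complement; with cell measures $p_e/n$, $(i-1)p_e/n$ and $1-ip_e/n$ this has probability $\binom{n}{k}(p_e/n)^k(1-ip_e/n)^{n-k}$, while the leftover mass $(1-qp_e/n)^n$ is the event that slots $1,\dots,q$ are all idle (a short telescoping argument verifies these sum to $1$). On the former event exactly $i$ slots have elapsed and every node outside $I_1\cup\dots\cup I_i$ has a larger $v$ than every node in $I_i$, so --- $I_1,\dots,I_{i-1}$ being empty --- the best node, equivalently the one with smallest $v$, is among the $k$ points in $I_i$, which, conditioned on their number, are i.i.d.\ uniform on $I_i$. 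Hence that second phase is distributed exactly as the resolution of a fresh collision of $k$ nodes, taking $X_k$ additional slots, and $\EX{1}=0$ because a lone transmitter is an immediate success. Summing $\EX{k}+i$ against these probabilities, and treating the all-idle event (which forces a collision of all $n$ nodes in slot $q+1$, after $q+1$ slots) separately, gives \eqn{eq:finite_nodes}.

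It remains to derive the recursion for $\EX{k}$ ($k\ge2$). Let $X_k$ count the slots strictly after a collision of $k$ nodes until it is resolved. The first such slot tests the lower half of the current $v$-interval, into which each of the $k$ nodes falls independently with probability $1/2$; let $L$ be the number that do, so $L$ is binomial with parameters $k$ and $1/2$ and, conditionally, the nodes in either half are again i.i.d.\ uniform there. If $L=1$ this slot is a success that already isolates the best node, costing $1$ slot; if $L\in\{0,k\}$ the slot is respectively idle or a collision of all $k$, one slot elapses and the state is a statistical copy of the original, costing $1+X_k'$; if $2\le L\le k-1$ the best node lies in the (nonempty) lower half, the upper half may be discarded, and the cost is $1+X_L'$ for an independent copy $X_L'$. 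Taking expectations, and using $\sum_{l=2}^{k-1}\binom{k}{l}=2^k-k-2$ to collapse all the ``$+1$'' contributions to exactly $1$, yields $\EX{k}=1+2^{1-k}\EX{k}+2^{-k}\sum_{l=2}^{k-1}\binom{k}{l}\EX{l}$, which rearranges to the stated formula.

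The main obstacle is the bookkeeping in this last step: one must check that discarding the upper half never throws away the best node (it does not, since the best node has the smallest $v$ and the lower half is the small-$v$ one, nonempty in that branch); that when $L=0$ the algorithm's third response rule --- idle after a past collision --- amounts to ``bisect the upper half directly'', so the post-idle state really is a scaled copy of the pre-split state on the same $k$ nodes; and that the ``$+1$'' in each branch accounts for exactly one slot. Once these correspondences are pinned down, both the idle-phase count and the recursion are routine algebra.
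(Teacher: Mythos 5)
Your proposal is correct and follows essentially the same route as the paper's proof: condition on the index $i$ of the first non-idle slot and the number $k$ of colliders, compute that joint probability via the uniform transform $v_i=F_c(u_i)$ and the multinomial/telescoping count over the intervals $[(j-1)p_e/n,\,jp_e/n)$, and handle the all-idle event (forcing an $n$-way collision in slot $q+1$) separately. The only difference is that you also derive the recursion for $\expect{X_k}$ from the binomial split (which checks out, including the $L=0$ and $L=k$ self-loop branches), whereas the paper simply cites it from Qin--Berry.
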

\begin{proof}
The proof is given in Appendix~\ref{proof of average_no_of_slots_finite_n}.
\end{proof}

The above expression is complex and does not directly reveal the
scalable nature of the algorithm.  The theorem below provides two
equivalent and new expressions for the asymptotic case ($n \tendsto
\infty$).  
 \begin{theorem}
  \label{m1}
  The average number of slots required to find the best node as
  $n\tendsto \infty$ is given by following two different yet
  equivalent expressions.
\begin{enumerate}
\item {\em Recursive expression:}
 \begin{equation}
 \aveslots{\infty}(p_e)  = \frac{1}{e^{p_e}-1}\sum_{k=1}^\infty \frac{\EX{k} p_e^k}{k!}  + \frac{1}{1-e^{-p_e}}.
\label{eq:m1}
 \end{equation}

\item {\em Non-recursive expression:}
\begin{equation}
\aveslots{\infty}(p_e)=\frac{1}{1-e^{-p_e}} + \sum_{i=1}^\infty p(i),
\label{eq:m2}
\end{equation}
where $p(i)= (1-P_0)\prod_{j=1}^{i-1} (1-\PI{j})$,
$P_0=\frac{p_e e^{-p_e}}{1-e^{-p_e}}$,
\mbox{$\PI{i}=\frac{2^{-i}p_ee^{-2^{-i}p_e}(1-e^{-2^{-i}p_e})}{1-(1+2^{-(i-1)}p_e)e^{-2^{-(i-1)}p_e}}$},~$\forall~i
\ge 1$.
\end{enumerate}
 \end{theorem}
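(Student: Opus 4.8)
The plan is to start from the exact finite-$n$ expression \eqref{eq:finite_nodes} of Lemma~\ref{average_no_of_slots_finite_n} and take the limit $n \tendsto \infty$ in a way that makes the Poisson structure manifest, which yields \eqref{eq:m1}; then, separately, derive \eqref{eq:m2} by re-interpreting the collision-resolution recursion as a renewal/Markov computation whose per-step success probabilities are the $\PI{i}$, and finally argue the two are equal. For the first part, I would set $p_e$ fixed and let $q = \ceil{n/p_e}-1 \tendsto\infty$. In the double sum, the factor $\binom{n}{k}(p_e/n)^k (1-ip_e/n)^{n-k}$ converges to the Poisson weight $e^{-ip_e}\,(ip_e)^k/k!$; more precisely, writing it as $\binom{n}{k}(p_e/n)^k(1-ip_e/n)^{n-k}$, the $k$-dependent part tends to $p_e^k/k!$ and the tail part $(1-ip_e/n)^{n}\tendsto e^{-ip_e}$. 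The residual term $(1-qp_e/n)^n(\EX{n}+q+1)$ vanishes because $qp_e/n \tendsto 1$ so the base tends to $0$ geometrically while $\EX{n}+q+1$ grows only like $\log n$ (this growth bound on $\EX{n}$ should be noted or cited — it follows from the recursion, since a collision among $k$ nodes resolves in $O(\log k)$ slots on average). Hence

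\[
\aveslots{\infty}(p_e) = \sum_{i=1}^{\infty}\sum_{k=1}^{\infty} e^{-ip_e}\frac{(ip_e)^k}{k!}\,\bigl(\EX{k}+i\bigr)\cdot\frac{1}{?}
\]

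— wait, the bookkeeping is subtler: in \eqref{eq:finite_nodes} the event "exactly $k$ nodes have metric in the $i$-th idle sub-interval and the first $i-1$ are empty" already encodes the idle phase, so in the limit the contribution is $\sum_{i\ge 1} e^{-(i-1)p_e}$ times the conditional expectation of $(\EX{K}+i)$ given $K\ge 1$ in slot $i$. Summing the geometric series $\sum_{i\ge1} e^{-(i-1)p_e} = 1/(1-e^{-p_e})$ produces the additive $1/(1-e^{-p_e})$ term and the factor $1/(e^{p_e}-1)$ in front of the $\sum_k \EX{k}p_e^k/k!$ term; I would carry out this resummation carefully, separating the "$+i$" part (which telescopes into $1/(1-e^{-p_e})$ after accounting for the idle-slot count) from the "$\EX{k}$" part (which gives the first term of \eqref{eq:m1} once conditioned on at least one arrival, explaining the $1/(e^{p_e}-1)$ normalization).

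For the non-recursive form \eqref{eq:m2}, the idea is different: track the collision phase as a process that, at stage $i$, has "survived" $i-1$ unsuccessful splits, and compute the probability $\PI{i}$ that stage $i$ produces the first success. Under the Poisson interpretation, after a collision the set of contending nodes restricted to the active sub-interval of (Poisson) mean $2^{-(i-1)}p_e$ is still Poisson, conditioned on containing $\ge 2$ points; splitting it in half and asking for exactly one point on one side (a success) while the feedback history is consistent gives exactly the stated $\PI{i}$ — the numerator $2^{-i}p_e e^{-2^{-i}p_e}(1-e^{-2^{-i}p_e})$ is $\Pr(\text{one point left})\Pr(\ge 1 \text{ right})$ and the denominator is $\Pr(\ge 2 \text{ in the parent interval}) = 1-(1+\mu)e^{-\mu}$ with $\mu = 2^{-(i-1)}p_e$. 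Then $\aveslots{\infty}$ equals the expected number of slots, which by the tail-sum formula is $\sum_{i\ge 0}\Pr(\text{not finished after } i \text{ slots})$; the idle phase contributes $1/(1-e^{-p_e})$ as before, and the collision phase contributes $\sum_{i\ge1} p(i)$ with $p(i)=(1-P_0)\prod_{j<i}(1-\PI{j})$, where $P_0$ is the probability that the very first contending slot already has exactly one node (conditioned on $\ge 1$). I would verify $P_0 = p_e e^{-p_e}/(1-e^{-p_e})$ directly and check the product structure against the Markov chain on "interval length index."

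The main obstacle I anticipate is making the interchange of limit and double summation in the first part rigorous — i.e., dominating $\binom{n}{k}(p_e/n)^k(1-ip_e/n)^{n-k}(\EX{k}+i)$ uniformly in $n$ so that dominated convergence applies over both $i$ and $k$ simultaneously. This needs the $O(\log k)$ bound on $\EX{k}$ together with a Poisson-tail bound to control the sum over $k$, and a uniform geometric bound $(1-ip_e/n)^{n} \le e^{-ip_e(1-\epsilon)}$ valid once $n$ is large to control the sum over $i$; the endpoint $i=q$ is the delicate one and is handled by the separate residual-term estimate above. A secondary (but purely algebraic) obstacle is proving the equivalence of \eqref{eq:m1} and \eqref{eq:m2} directly without going through finite $n$ — but since both are derived as the $n\tendsto\infty$ limit of the same quantity $\aveslots{n}(p_e)$, equivalence is automatic, and I would simply remark this rather than manipulate the series.
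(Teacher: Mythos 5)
Your proposal is correct and follows essentially the same route as the paper: the recursive form comes from the rate-one Poisson limit of the normalized metrics $nF_c(u_i)$, yielding $\aveslots{\infty}(p_e)=\sum_{i\ge1}\sum_{k\ge1}e^{-ip_e}\frac{p_e^k}{k!}\left(\EX{k}+i\right)$ and then the stated resummation, while the non-recursive form comes from a tail-sum over a Markov chain indexed by the number of splits, with exactly your $P_0$ and $\PI{i}$. Two small slips to fix: the limiting weight is $e^{-ip_e}p_e^k/k!$, not $e^{-ip_e}(ip_e)^k/k!$ (your componentwise limits are right; your first displayed guess is not), and in the residual term $q+1\sim n/p_e$ grows linearly rather than logarithmically --- though $(1-qp_e/n)^n\le(p_e/n)^n$ still annihilates it. The one step you gesture at rather than establish is the path-independence of $\PI{i}$: whether state $i$ is reached via a collision or via an idle, the contending set is the left half (length $2^{-i}p_e$) of an interval of length $2^{-(i-1)}p_e$ known to contain at least two points, and independence of the Poisson process over disjoint intervals makes the conditional law identical in both cases; this is precisely the content of the paper's Markov-chain lemma and should be spelled out.
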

\begin{proof}
  We show the proof for the recursive expression in~\eqn{eq:m1} below
  as it leads to a powerful new Poisson point process interpretation
  that will be useful throughout this paper.  For example, it will
  lead to the derivation of the non-recursive expression
  in~\eqn{eq:m2}, whose proof is relegated to Appendix~\ref{proof of
    m2}. The physical meaning of $p(i)$ and $P_{j}$ will become clear
  after the proof.

  Let node $i$ have metric $u_i$ with CCDF $F_c(u)$. Let $y_i =
  nF_c(u_i)$. Then, $y_i$ are i.i.d.\ and are uniformly distributed in
  $[0,n]$.  {\it Note that selecting the node with the highest $u_{i}$
    is equivalent to selecting the node with the lowest $y_{i}$
    because the CCDF is a monotonically decreasing function.}  Sorting
  $\{y_{i}\}_{i=1}^n$ in ascending order, we get \mbox{$y_{[1]} \le
    y_{[2]} \le y_{[3]} \cdots \le y_{[n]}$}, where $[i]$ is the index of
  the relay with the $i^{\text{th}}$ largest metric.

  Given $y_{i}$, we can define a point process~\cite{wolff} $M(t)$ as
  \mbox{$M(t)=\max{\left\{k\ge1:y_{[k]}\leq t\right\}}$}. Thus, $M(t)$
  is the number of points that have occurred up to time $t$. Since
  $\{y_{i}\}_{i=1}^n$ are \iid\ and uniformly distributed, $M(t)$ is
  binomially distributed.  As $n \to \infty $, it can be shown that
  $M(t)$ forms a Poisson process with rate 1~\cite{wolff}. Now, the
  probability that the first non-idle slot is the $i\kth$ slot and $k
  \ge 1$ nodes are involved is equal to the probability that
  $y_{[1]},\ldots,y_{[k]}$ lie between $(i-1)p_{e}$ and $i p_{e}$, and
  $y_{[j]} > ip_{e}$, for $k+1 \leq j \leq n$.  It also implies that
  no points lie between $0$ and $(i-1)p_{e}$.  Therefore,
\begin{align}
&\hspace{-1cm}\prob{ x_{[1]}>(i-1)\frac{p_e}{n},(i-1)\frac{p_e}{n}<x_{[k]}<i\frac{p_e}{n},x_{[k+1]}>i\frac{p_e}{n}}\nonumber\\
& = \hspace{3mm}\prob{ M((i-1)p_e)=0, \; M(ip_e)=k},\nonumber\\
& = \hspace{3mm}\prob{ M((i-1)p_e)=0 }\prob{ M(ip_e)=k \;|M((i-1)p_e)=0},\nonumber\\
& \stackbin[n \tendsto \infty]{a}{=}  \hspace{3mm} e^{-(i-1)p_e}\; e^{-p_e}\frac{p_e^k}{k!} = e^{-ip_e}\frac{p_e^k}{k!}.
\end{align}
Here, (a) follows from the memoryless property of the Poisson
process~\cite{wolff}.  Recall that $\EX{k}$ is the expected number of
slots required to resolve a collision among $k$ nodes.  Thus, if the
first non-idle slot is the $i\kth$ slot and $k \ge 1$ nodes are
involved, then $\EX{k}+i$ slots are required to find the best node.
Also, as $n\to\infty$, $q p_e/n\to 1$. Hence, we get
$\aveslots{\infty}(p_e) = \sum_{i=1}^\infty\sum_{k=1}^\infty
e^{-ip_e}\frac{p_e^k}{k!} \left(\EX{k}+i\right)$.
%
The desired
result follows with the help of combinatorial
identities~\cite{gradshteyn00_book}.
\end{proof}

%

The main theorem readily gives rises to the following upper bound expression that does not involve an infinite series.  
\begin{corollary} \label{upperbound}
For any real $k_0 \geq e/2$,
\begin{equation} \label{eq:upperbound}
\aveslots{\infty}(p_e) \le \frac{p_e}{k_0\log_{e}(2)} + \log_2\brac{\frac{2 k_0}{e}} + \frac{1}{1-e^{-p_e}}.
\end{equation}

\end{corollary}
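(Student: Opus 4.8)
\emph{Proof sketch (plan).} The plan is to bound the series in the recursive expression~\eqn{eq:m1} by replacing $\EX{k}$ with a closed-form upper bound that is affine in $k$, so that the resulting exponential sums can be evaluated exactly and matched against the right-hand side of~\eqn{eq:upperbound}. The shape of~\eqn{eq:upperbound} points to the two-step estimate $\EX{k}\le 1+\log_2 k$ followed by the tangent-line inequality for the concave logarithm, $\ln k\le \frac{k}{k_0}+\ln k_0-1$, which after dividing by $\ln 2$ and adding $1$ becomes
\[
1+\log_2 k \;\le\; \frac{k}{k_0\ln 2}+\log_2\!\brac{\frac{2k_0}{e}},\qquad\text{valid for every }k\ge 1\text{ and every }k_0>0.
\]
The hypothesis $k_0\ge e/2$ is used only at the very last step, where it guarantees $\log_2(2k_0/e)\ge 0$.

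First I would establish the auxiliary estimate $\EX{k}\le 1+\log_2 k$ for all $k\ge 2$ by strong induction on the recursion $\EX{k}=\bigl(2^{-k}\sum_{l=2}^{k-1}\binom{k}{l}\EX{l}+1\bigr)\big/\bigl(1-2^{-(k-1)}\bigr)$ of Lemma~\ref{average_no_of_slots_finite_n}. The base case $k=2$ holds with equality, since $\EX{2}=2=1+\log_2 2$. For the inductive step, substituting $\EX{l}\le 1+\log_2 l$ and using $\sum_{l=2}^{k-1}\binom{k}{l}=2^k-k-2$ reduces matters to bounding $\sum_{l=2}^{k-1}\binom{k}{l}\log_2 l$; here I would symmetrize with $\binom{k}{l}=\binom{k}{k-l}$ and apply concavity in the form $\log_2 l+\log_2(k-l)=\log_2\!\bigl(l(k-l)\bigr)\le 2\log_2 k-2$ to get $\sum_{l=2}^{k-1}\binom{k}{l}\log_2 l\le (2^k-k-2)(\log_2 k-1)$. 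Plugging this back in and clearing the positive denominator, the target inequality $\EX{k}\le 1+\log_2 k$ collapses to the elementary fact $k\log_2 k\ge 2$, true for all $k\ge 2$. This induction is the step I expect to be the main obstacle: the binomial sum must be controlled tightly enough that no additive constant is lost (the $k=2$ case is already an equality), and the symmetrization/concavity pairing is precisely what makes the bound go through with no slack to spare.

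Finally I would substitute back into~\eqn{eq:m1}. Keeping the $k=1$ term exact (it vanishes, since $\EX{1}=0$) and applying the affine bound only for $k\ge 2$, together with the identities $\sum_{k\ge 2}\frac{k\,p_e^{k}}{k!}=p_e(e^{p_e}-1)$ and $\sum_{k\ge 2}\frac{p_e^{k}}{k!}=e^{p_e}-1-p_e$, one obtains
\[
\frac{1}{e^{p_e}-1}\sum_{k=1}^{\infty}\frac{\EX{k}\,p_e^{k}}{k!}\;\le\;\frac{p_e}{k_0\ln 2}+\log_2\!\brac{\frac{2k_0}{e}}\cdot\brac{1-\frac{p_e}{e^{p_e}-1}}\;\le\;\frac{p_e}{k_0\ln 2}+\log_2\!\brac{\frac{2k_0}{e}},
\]
where the last step uses $p_e>0$ together with $\log_2(2k_0/e)\ge 0$ --- which is exactly where $k_0\ge e/2$ enters. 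Adding the remaining term $\frac{1}{1-e^{-p_e}}$ from~\eqn{eq:m1} yields~\eqn{eq:upperbound}, completing the argument.
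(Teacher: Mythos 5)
Your overall route is the paper's: bound $\EX{k}$ by $1+\log_2 k$, linearize the logarithm by its tangent at $k_0$, substitute the resulting affine-in-$k$ bound into \eqn{eq:m1}, evaluate the two exponential sums exactly, and invoke $k_0\ge e/2$ only to discard the factor $1-\frac{p_e}{e^{p_e}-1}\le 1$ multiplying $\log_2\brac{2k_0/e}$. That part is correct and matches the paper's proof step for step. The one place you diverge is that the paper simply cites \cite{qin_infocomm_2004} for the estimate $\EX{k}\le \log_2(k)+1$, whereas you supply a strong induction on the recursion; making the corollary self-contained in this way is a small but genuine addition.

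However, the intermediate inequality you state inside that induction is false. For $k=3$ one has $\sum_{l=2}^{k-1}\binom{k}{l}\log_2 l=\binom{3}{2}\log_2 2=3$, while $(2^k-k-2)(\log_2 k-1)=3(\log_2 3-1)\approx 1.75$. The symmetrization cannot produce the factor $2^k-k-2$: since $\log_2 1=0$ you may include the $l=1$ term for free and pair it with $l=k-1$, but then the binomial mass being redistributed is $\sum_{l=1}^{k-1}\binom{k}{l}=2^k-2$, and concavity gives only
\begin{equation*}
\sum_{l=2}^{k-1}\binom{k}{l}\log_2 l\;\le\;(2^k-2)(\log_2 k-1).
\end{equation*}
Fortunately this weaker, correct bound still closes the induction: carrying it through, the target $\EX{k}\le 1+\log_2 k$ reduces to $2^k-k\le 2^k-2$, i.e., $k\ge 2$, rather than to your $k\log_2 k\ge 2$. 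So the auxiliary lemma, and hence the corollary, survive, but the step you yourself single out as the delicate one needs this repair before the argument is sound.
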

\begin{proof}
The proof is given in Appendix~\ref{proof of upperbound}.
\end{proof}

Alternatively, since both the expressions derived in Theorem~\ref{m1}
involve only positive terms in the series summation, considering only
the first few terms of the infinite series in~\eqn{eq:m1} and
\eqn{eq:m2} results in tight lower bounds near the optimal contention
parameter value.  These simplified expressions allow system designers
to quickly compute the necessary parameters for system optimization.
As we shall see, their behavior turns out to be quite different and
sheds light on the differences between the two equivalent expressions
derived in Theorem~\ref{m1}.

\subsection{Results for Single Relay Selection}

Figure~\ref{fig:m(p_e)} plots the average number of slots required to
select the best node as a function of $p_{e}$ for the two expressions
and verifies them using Monte Carlo simulations.  It can be seen that
the asymptotic expression is accurate even when the number of relays
is small (e.g., 10). Furthermore, the optimal value of
$\aveslots{\infty}(p_{e})$ is 2.467, and occurs at $p_e=1.088$. As
expected, the optimal $p_e$ does not exceed 2. This is because having
more than two nodes on average to transmit and collide in a slot is
suboptimal. We also observe that $\aveslots{\infty}(p_{e})$ at $p_e =
1$ is quite close to the optimal value.
\begin{figure}[p]
        \centering
                \includegraphics[width=\linewidth]{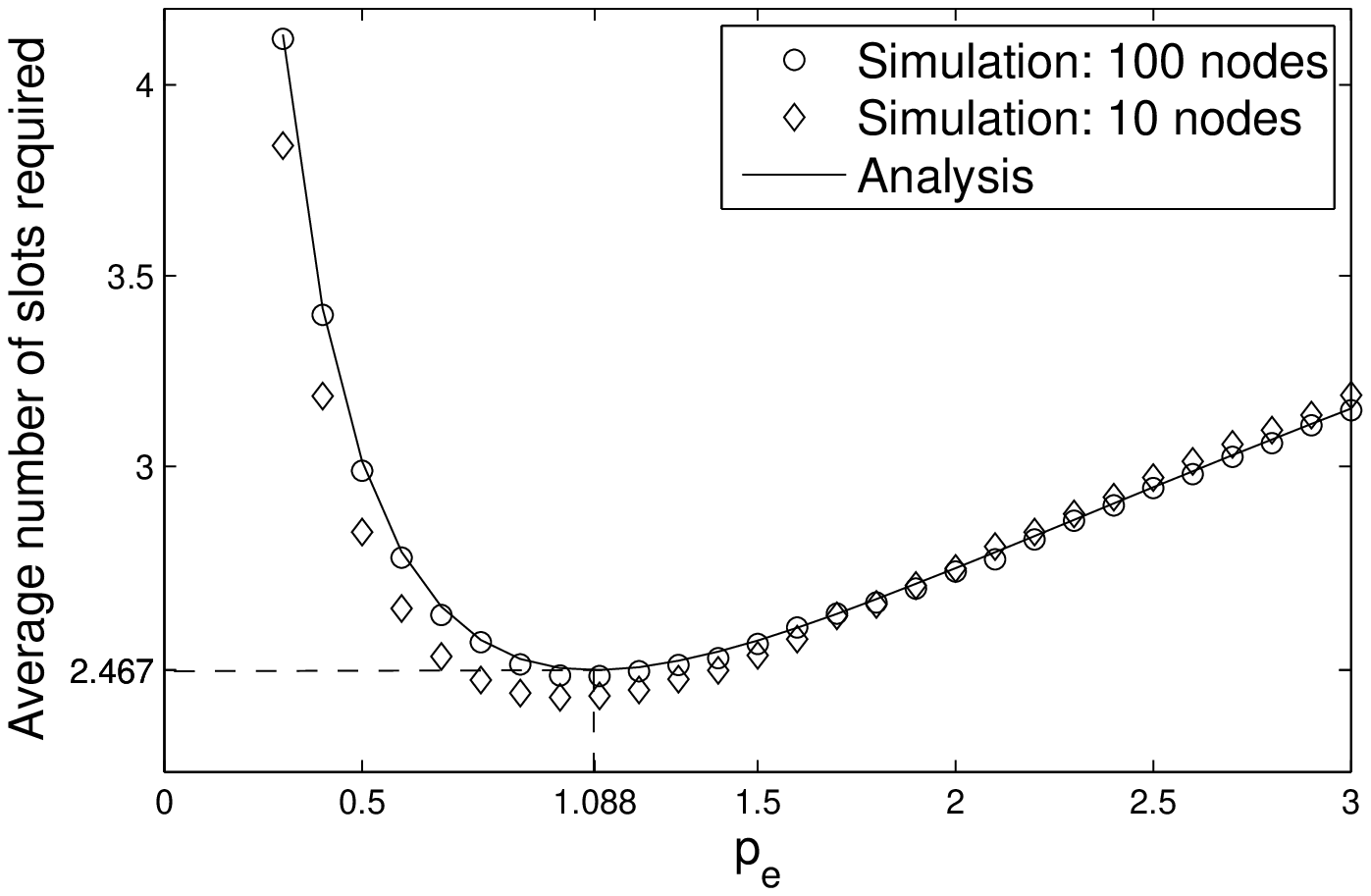}
                \caption{Average number of slots required to select the best node $(\aveslots{\infty}(p_e))$ as a function of $p_e$ }
        \label{fig:m(p_e)}
\end{figure}

Figure~\ref{fig:upperbounds} plots the upper bound using $k_0 = 2$. As
expected, it has a unique minimum and follows the behavior of the
exact expression well in the region of interest of $p_e$.  The same
figure also compares the lower bounds obtained using the first 4 terms
of both the expressions of Theorem~\ref{m1}.  For higher values of
$p_e$, the lower bound obtained by truncating the recursive expression
in~\eqn{eq:m1} does not capture the behavior of the exact expression
well. This is because of the truncation, on account of which
the possibility that a large number of nodes collide in the first non-idle
slot is not accounted for. This probability is not negligible for
larger $p_e$.  However, the lower bound obtained by truncating the
non-recursive expression in~\eqn{eq:m2} does better at larger $p_e$
because the summation in the series is over the number of slots
required after the first non-idle slot and not over the number of
nodes that collided in the first non-idle slot.

%
%
%
\begin{figure}
  \centering \includegraphics[width=0.8\columnwidth,
  keepaspectratio]{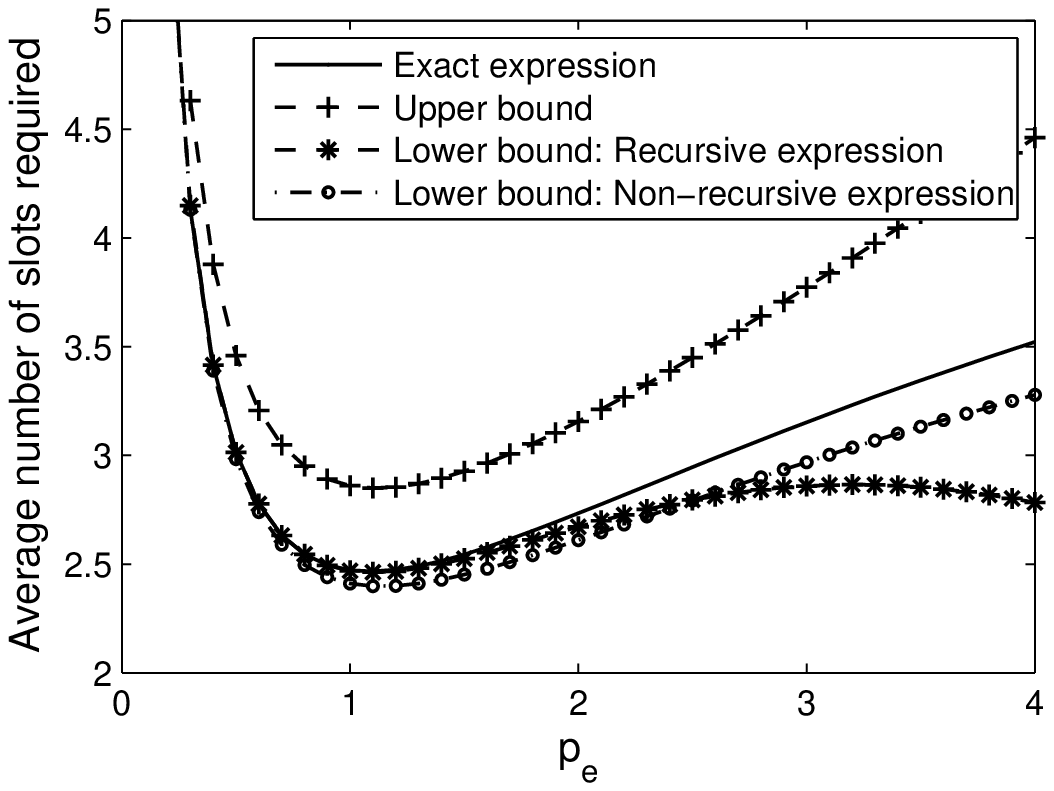}
                \caption{Upper and lower bounds for the average number of slots required to select the best node.}
        \label{fig:upperbounds}
\end{figure}

\section{Q-Relay Selection Algorithm}
\label{sec:Q-Relay Selection Algorithm}

We now develop a new family of splitting algorithms for selecting the
relays with the $Q$ best (highest) metrics, where $Q$ is a
pre-specified system parameter. The value of $Q$ depends on the system
under consideration. For example, in~\cite{ding_TWC_2009}, $M-1$
relays need to be selected to forward the transmissions by $M$
sources.  The choice of $Q$, which is beyond the scope of this paper,
is ultimately governed by the end-to-end system performance and
practical constraints such as the synchronization requirements across
the selected relays. For example, while having more cooperative relays
improves the reliability or speed of transmission of data to the
destination, selecting them will also require the system to expend
additional resources. The reader is referred
to~\cite{nam_CISS_2008,madan_TWC_2008,lo_VT_2009_SubsetSelection,shah_globecom_2009}
for a detailed discussion on this aspect.

\subsection{Algorithm Motivation and Definition}
\label{subsec:Q-algo defiition}

When we revisit the asymptotic regime considered in the previous
section, we observe the following.  The single node selection
algorithm, in effect, runs the FCFS
algorithm~\cite{bertsekas_gallager} on the Poisson point process
$M(t)$ defined in Theorem~\ref{m1}, with $t$ being interpreted as
time.  However, unlike FCFS, the single relay selection algorithm
stops as soon as it finds the first (best) node.  In this context, the
parameter $p_e$ is analogous to FCFS's {\em initial contention
  interval}.

Based on the above insight, we now formally state the new multi-relay
selection algorithm given any $Q$. We then explain the logic behind it
and fully analyze it.  For this, we adopt the notation used for FCFS
in~\cite{bertsekas_gallager}, as it turns out to be more convenient.

As in Sec.~\ref{sec:Single Relay selection}, let $y_{i} = n F_c(u_i)$.
The algorithm specifies four state variables $S(k)$, $T(k)$,
$\alpha(k)$, and $\sigma(k)$ for each slot $k$. $S(k)$ is the number
of nodes selected before slot $k$.  $(T(k), T(k)+\alpha(k))$
represents the {\em threshold interval} for slot $k$, $\ie$, all the
nodes with $y_i\in(T(k), T(k)+\alpha(k))$ transmit in slot $k$.
(Equivalently, $H_H(k)=F_c^{-1}\brac{T(k)/n}$ and \mbox{$H_L(k)=
  F_c^{-1}\brac{ \brac{ T(k) + \alpha(k)}/n}$.}
$\sigma(k)\in\{L,R\}$ indicates whether the $k\kth$ slot interval is
the left half or the right half of the previously split interval.
During initial slots, when no collision is to be resolved,
$\sigma(k)=R$ by convention. Thus, for $k = 1$, we have
$S(1) = 0$, $T(1) = 0$, $\alpha(1)=p_e$, and $\sigma(1) = R$.

In the $(k+1)\kth$ slot ($k \geq 1$):
\begin{enumerate}
\item If feedback is a collision ($e$), then $T(k+1) = T(k) $, \mbox{$\alpha(k+1) = \alpha(k)/2$,} and $\sigma(k+1) = L$.

\item If feedback is a success ($1$) and $\sigma(k) = L$, then \mbox{$T(k+1) = T(k) + \alpha(k)$}, $\alpha(k+1) = \alpha(k)$,
and \mbox{$\sigma(k+1) = R$. }

\item If feedback is an idle ($0$) and $\sigma(k) = L$, then \mbox{$T(k+1) = T(k) +\alpha(k)$,} $\alpha(k+1) = \alpha(k)/2$, and $\sigma(k+1) = L$.

\item If feedback is an idle ($0$) or a success ($1$), and $\sigma(k) = R$, then $ T(k+1) = T(k)+\alpha(k)$, $\alpha(k+1) = p_{e}$, and $\sigma(k+1) = R$.

\item Increment $S(k+1)$ by 1 if feedback is a success ($1$). Terminate if $S(k+1)$ reaches $Q$.
\end{enumerate}

\subsection{Brief Explanation}

The logic behind the algorithm is as follows: (i)~When a collision
occurs, the threshold interval for the next slot is the left ($L$)
half of that of the present slot. (ii)~When a collision occurs, the
threshold interval must have at least 2 nodes. Thus, when a success
follows a collision, the threshold interval for the next slot is the
right (higher) half ($R$) of the previous slot, since it is known to
have at least one node.  (iii)~When an idle follows a collision, it
implies that all the nodes involved in collision lie in the right half
of the previous split interval. Thus, it is further split it into two
equal halves, and the threshold interval for the next slot is the left
half of this split.  (iv)~When there is no collision to be resolved,
the algorithm moves to the adjacent threshold interval (which we call
as collision resolution interval) of size $p_e$.\footnote{We can relax
  the restriction that each collision resolution interval is of length
  $p_e$. However, it can be shown that doing this leads to a
  negligible improvement.} As mentioned above, the algorithm
terminates after the $Q$ successes.

{\em Comments:} The proposed algorithm is equivalent to the algorithm
of Sec.~\ref{sec:Single Relay selection} when $Q=1$. It is similar to
FCFS, except that it stops after the $Q\kth$ success. There is one
subtle difference, however, between the algorithm and FCFS. In FCFS,
the contention resolution interval can be smaller than $p_e$ if the
difference between the current time and the time of the last resolved
interval is small.  However, this does not happen in our algorithm
(step~4) because all the nodes know their individual metrics a priori.
Notice that the algorithm is greedy in that it does not account for
possible interactions between metrics of the relays.  However, such a
greedy approach has often been used given its inherent
distributability~\cite{ding_TWC_2009}.\footnote{A more general version
  of the algorithm would allow for the metrics to be modified on the
  basis of the relays that have already been selected.  Developing
  such an algorithm is an interesting avenue for future work, and
  would find several applications, such as in the time-sharing
  proportional fair solution of~\cite{oechtering_TWC_2008}.}

\subsection{Algorithm Analysis:  Best Two Nodes Selection}

First, we analyze the algorithm for selecting the best two nodes using
the Poisson point approach that came out of Sec.~\ref{sec:Single Relay
  selection}.  This will lead to an analysis for the general $Q > 2$
node selection case. The $Q = 2$ analysis is shown separately as it
turns out to be richer.

Let $\aveslotsQ{\infty}{Q}(p_e)$ represent the average number of slots
required to select the best $Q$ nodes. Thus, the symbol
$\aveslots{\infty}(p_e)$, which was used in the previous section on
single relay selection, is equivalent to $\aveslotsQ{\infty}{1}(p_e)$.
The following theorem gives two different but equivalent and exact
expressions for $\aveslotsQ{\infty}{2}(p_e)$.
\begin{theorem}
\label{2 best nodes}
Let $\EXQ{k}{Q}$ denote the average number of slots required to select
the best $Q$ nodes after $k$ nodes collide. As $n\to \infty$,
$\aveslotsQ{\infty}{2}(p_e)$ is given by
\begin{equation}\label{aveslots2}
\aveslotsQ{\infty}{2}(p_e) =\frac{1}{e^{p_e}-1} \sum_{k=1}^\infty \frac{\EXQ{k}{2}p_e^k}{k!}  + \frac{1}{1-e^{-p_e}},
 \end{equation}
where
\begin{equation}
\label{EX_k[2]}
 \EXQ{k}{2}=
\left(2^{k}-2\right)^{-1}\Bigg(\bigg(\sum_{i=2}^{k-1} \binom{k}{i}\EXQ{i}{2} + k\left(1+\EXQ{k-1}{1}\right) \bigg)+2^k\Bigg), \: \forall~k \ge 3,
 \end{equation}
 $ \EXQ{2}{2}=3$, and $\EXQ{1}{2}=\aveslotsQ{\infty}{1}(p_e)$.

Alternately, $\aveslotsQ{\infty}{2}(p_e)$ also equals \begin{equation}
\aveslotsQ{\infty}{2}(p_e)=\frac{1}{1-e^{-p_e}}  + P_0\aveslotsQ{\infty}{1}(p_e) + \sum_{i=1}^\infty \left( p(i)+p'(i)+p''(i+1)\right),
\label{aveslots2_2}
 \end{equation}
 where
$p(i)= (1-P_0)\prod_{j=1}^{i-1} (1-\PLi{j})$, $\forall~i\ge1$,
$p'(i)=p(i)\PLi{i},\forall~i\ge1$,
 $p''(2)=p'(1)(1-\PRi{1})$, and $p''(i)=p'(i-1)(1-\PRi{i-1})+p''(i-1)(1-\PLi{i-1})$, $\forall~i>2$.
 Here, $P_0=\frac{p_e e^{-p_e }}{1-e^{-p_e }},$
$ \PLi{i}=\frac{2^{-i}p_ee^{-2^{-i}p_e}(1-e^{-2^{-i}p_e})}{1-(1+2^{-(i-1)}p_e)e^{-2^{-(i-1)}p_e}}, \;\forall i\ge 1$,
$ \PRi{i}=\frac{2^{-i}p_ee^{-2^{-i}p_e}}{1-e^{-2^{-i}p_e}}, \;\forall i\ge 1$.
\end{theorem}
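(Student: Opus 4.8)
The plan is to mirror the proof of Theorem~\ref{m1}, working with the limiting rate-one Poisson process $M(t)$, and to establish the recursive form~\eqn{aveslots2} and the non-recursive form~\eqn{aveslots2_2} separately; their equivalence is then immediate since both are shown to equal $\aveslotsQ{\infty}{2}(p_e)$.

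For~\eqn{aveslots2}, I would first reuse the ``first non-idle slot'' step of Theorem~\ref{m1} unchanged: by memorylessness of $M(t)$, the probability that the first non-idle slot is the $i\kth$ slot and carries $k\ge1$ nodes tends to $e^{-ip_e}p_e^k/k!$ as $n\tendsto\infty$, and conditioned on this all points of that collision interval lie in $\left((i-1)p_e,\,ip_e\right)$, so $\EXQ{k}{2}$ further slots are needed; hence $\aveslotsQ{\infty}{2}(p_e)=\sum_{i\ge1}\sum_{k\ge1}e^{-ip_e}\,\frac{p_e^k}{k!}\,\left(\EXQ{k}{2}+i\right)$, which collapses to~\eqn{aveslots2} by the same combinatorial simplification as for~\eqn{eq:m1}. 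It remains to derive the recursion for $\EXQ{k}{2}$. After $k$ nodes collide, the next slot examines the lower-$y$ (left) half, into which $l$ of the $k$ points fall with probability $\binom{k}{l}2^{-k}$; the key observation is that this half holds the smallest $y$-values, so: if $l\ge2$, both best nodes lie there and the upper half is irrelevant ($\EXQ{l}{2}$ more slots); if $l=1$, that point is the overall best while the remaining $k-1$ points, all in the upper half, still have to be resolved to their single best ($1+\EXQ{k-1}{1}$ more slots, the $1$ being the slot that examines the upper half); if $l=0$, the slot is idle and, by self-similarity, the identical situation recurs on the upper half ($\EXQ{k}{2}$ more slots). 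Collecting the weighted contributions, isolating the two self-referential terms $l=0$ and $l=k$, and solving for $\EXQ{k}{2}$ yields the stated recursion; the base cases are $\EXQ{1}{2}=\aveslotsQ{\infty}{1}(p_e)$, because after the best node is isolated the residual Poisson process is, by memorylessness, a fresh single-relay instance, and $\EXQ{2}{2}=3$ by direct evaluation (the recursion formula also returns $3$ at $k=2$).

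For~\eqn{aveslots2_2}, the plan is a small Markov-chain bookkeeping argument on the collision-resolution phase, in the spirit of the derivation of~\eqn{eq:m2}. With probability $P_0$ (the probability that the first non-idle slot carries exactly one node given it is non-idle) that slot is a success and identifies the best, after which the search for the second best is a fresh single-relay run contributing $\aveslotsQ{\infty}{1}(p_e)$ slots on average---this accounts for the $P_0\,\aveslotsQ{\infty}{1}(p_e)$ term. Otherwise a genuine collision occurs and the algorithm repeatedly halves a known interval that still holds $\ge2$ points; here I would track three state families: ``$A$ at level $i$'', hunting the best inside a known interval of length $2^{-(i-1)}p_e$ holding $\ge2$ points (whose lower half of length $2^{-i}p_e$ is examined next); the single slot that isolates the best; and ``$B$ at level $i$'', the residual single-relay search inside a known $\ge2$-point interval. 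For a length-$2^{-(i-1)}p_e$ interval conditioned to hold $\ge2$ Poisson points, the probability that its lower half holds exactly one point while its upper half is non-empty equals $\PLi{i}$, and this is precisely the event that leaves state $A$ (and, inside the single-relay sub-problem, the event that terminates state $B$); conditioned on the lower half holding exactly one point, the upper half holds exactly one further point with probability $\PRi{i}$. These identifications give $p(i)=(1-P_0)\prod_{j=1}^{i-1}(1-\PLi{j})$, $p'(i)=p(i)\PLi{i}$ and the stated recursion for $p''$, with $p(i)$, $p'(i)$ and $p''(i)$ being the occupation probabilities of, respectively, state $A$ at level $i$, the post-isolation slot following level $i$, and state $B$ at level $i$. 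Charging one slot to each visit and summing produces $\sum_{i\ge1}\left(p(i)+p'(i)+p''(i+1)\right)$; adding the idle-phase mean $1/(1-e^{-p_e})$ and the $P_0\,\aveslotsQ{\infty}{1}(p_e)$ term then gives~\eqn{aveslots2_2}.

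The hard part will be the accounting in the second derivation: fixing the right state space and deciding which physical slot is charged to which state. The delicate case is the transition ``lower half holds exactly one point, upper half holds $\ge2$ points'', where one extra slot---the collision detected on the upper half---is consumed before the single-relay sub-search (state $B$) can begin; this is exactly why $p''$ is indexed from level $2$ and why the final sum contains $p''(i+1)$ rather than $p''(i)$. By comparison the first derivation is routine once one notices that any lower half containing $\ge2$ points necessarily contains both best nodes, so the recursion for $\EXQ{k}{2}$ only ever calls $\EXQ{l}{2}$ with $l\le k$ together with $\EXQ{k-1}{1}$.
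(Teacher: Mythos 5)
Your proposal is correct and takes essentially the same route as the paper's proof: the same Poisson conditioning on the first non-idle slot together with the collision/idle/success case analysis on the lower half yields the recursion for $\EXQ{k}{2}$, and your three state families ($A$ at level $i$, the post-isolation slot, and $B$ at level $i$) are exactly the paper's Markov-chain states $i$, $i'$, and $i''$, with the identical identification of $P_0$, $\PLi{i}$, and $\PRi{i}$ via memorylessness of the Poisson process. The indexing subtlety you single out ($p''$ starting at level $2$ and the sum containing $p''(i+1)$) is resolved in the paper exactly as you describe.
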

\begin{proof}
  The proof is given in Appendix~\ref{proof of 2 best nodes}. It also
  gives a physical meaning for $p(i)$, $p'(i)$, $p''(i)$, $\PLi{i}$,
  and $\PRi{i}$.
\end{proof}

\subsection{Algorithm Analysis:  Best $Q>2$ Nodes Selection}
We now derive a general expression for $\aveslotsQ{\infty}{Q}(p_e)$
for any $Q>2$ This generalizes the first result of Theorem~\ref{2 best
  nodes}.

\begin{theorem}
\label{Q best nodes}
As $n\to \infty$, the average number of slots
required to select the  best $Q>2$ nodes is
 \begin{equation}
\label{aveslotsQ}
\aveslotsQ{\infty}{Q}(p_e) =\frac{1}{e^{p_e}-1} \sum_{k=1}^\infty \frac{\EXQ{k}{Q}p_e^k}{k!}  + \frac{1}{1-e^{-p_e}},
 \end{equation}
where
 \begin{equation}
\label{EX_k[Q]}
 \EXQ{k}{Q}=
\left(2^{k}-2\right)^{-1}\Bigg(\bigg(\sum_{i=2}^{k-1} \binom{k}{i}\EXQ{i}{Q} + k\left(1+\EXQ{k-1}{Q-1}\right) \bigg)+2^k\Bigg), \: \forall~k \ge 3,
 \end{equation}
$\EXQ{2}{Q}=\aveslotsQ{\infty}{Q-2}(p_e) + 3, \; \forall \; Q > 2$, $ \EXQ{2}{2}=3$, and $\EXQ{1}{Q}\!\!=\!\aveslotsQ{\infty}{Q-1}(p_e)$.
\end{theorem}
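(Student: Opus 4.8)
The plan is to follow exactly the template of the proof of Theorem~\ref{m1} (and the recursive half of Theorem~\ref{2 best nodes}), reusing the Poisson point process interpretation and changing only the collision‑resolution quantity. First I would set $y_i = nF_c(u_i)$, so that as $n\to\infty$ the $\{y_i\}$ form a rate‑$1$ Poisson process and selecting the best $Q$ metrics becomes selecting the $Q$ points of smallest $y$. The $Q$‑algorithm (by step~4) scans the disjoint length‑$p_e$ intervals $[0,p_e),[p_e,2p_e),\dots$ until the first non‑empty one; by the memoryless property, $\Pr[\text{the first non-idle slot is slot } i \text{ and it contains exactly } k \text{ points}] = e^{-ip_e}p_e^k/k!$, and, conditionally, the points beyond $ip_e$ again form a fresh rate‑$1$ Poisson process. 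Defining $\EXQ{k}{Q}$ as the expected number of additional slots needed to accumulate $Q$ successes starting just after such a $k$‑node collision (with a fresh Poisson process beyond the collision interval), this gives $\aveslotsQ{\infty}{Q}(p_e)=\sum_{i\ge 1}\sum_{k\ge 1} e^{-ip_e}\frac{p_e^k}{k!}\bigl(\EXQ{k}{Q}+i\bigr)$, and the same combinatorial identities used in Theorem~\ref{m1} collapse this to \eqn{aveslotsQ}. Note that this outer step is $Q$‑independent; only the value of $\EXQ{k}{Q}$ differs.

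Next I would establish the recursion \eqn{EX_k[Q]} for $k\ge 3$ by a one‑step analysis of the binary split. Conditioned on a $k$‑collision, the number $j$ of nodes landing in the left half is $\mathrm{Binomial}(k,1/2)$ (positions are i.i.d.\ uniform given the count). I would then split into cases: (i) $j\in\{0,k\}$, where the next slot is an idle or a collision that returns the algorithm to a probabilistically identical "$k$‑collision'' state, producing a self‑referential term and hence the $(2^k-2)^{-1}$ normalization; (ii) $j=1$, where a success in the left half reveals the current best and leaves the remaining $k-1$ nodes in the right half with $Q-1$ still to be found, contributing the term $k\bigl(1+\EXQ{k-1}{Q-1}\bigr)$; and (iii) $2\le j\le k-1$, a collision among the $j$ smallest‑$y$ of the $k$ nodes, after whose resolution the state is that of a fresh $\EXQ{j}{Q}$ subproblem, contributing $\sum_{j}\binom{k}{j}\EXQ{j}{Q}$. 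Accounting for the one slot spent per step via $\sum_{j=0}^{k}\binom{k}{j}=2^k$ (after peeling off $j=0,1,k-1,k$) produces the additive $2^k$ and yields \eqn{EX_k[Q]}.

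For the base cases: when $k=1$ a lone success selects the best available node, and by the memoryless property the line refreshes to a fresh rate‑$1$ Poisson process on which one must now select the best $Q-1$, so $\EXQ{1}{Q}=\aveslotsQ{\infty}{Q-1}(p_e)$. When $k=2$, a short two‑state recursion identical to the one behind $\EX{2}$ in Lemma~\ref{average_no_of_slots_finite_n} shows that a $2$‑node collision and both of its nodes are resolved in $3$ slots on average; for $Q>2$ the process then refreshes and $Q-2$ nodes remain, giving $\EXQ{2}{Q}=3+\aveslotsQ{\infty}{Q-2}(p_e)$, consistent with $\EXQ{2}{2}=3$ and the convention $\aveslotsQ{\infty}{0}=0$. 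Substituting these base cases and the recursion of Step~2 into the expression obtained in Step~1 finishes the proof, and the statement is seen to generalize the recursive expression of Theorem~\ref{2 best nodes}.

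The hard part will be rigorously justifying the "refresh to a fresh rate‑$1$ Poisson process'' invoked in cases (ii)--(iii) above and in the base cases — that is, showing that once the algorithm exhausts a branch and (by step~4) resets its contention interval to the full width $p_e$, the as‑yet‑unscanned part of the line is again a fresh Poisson process, so that $\EXQ{j}{Q}$ and $\aveslotsQ{\infty}{Q-j}$ may be substituted verbatim; this requires careful tracking of which portion of the line has already been examined and of how the known leftover nodes in a sibling half are absorbed into the next full‑width contention interval. The second delicate point is getting the slot bookkeeping exactly right in each case — which slots are counted inside $\EXQ{\cdot}{\cdot}$ versus outside — since this is precisely what produces the $+1$ in the $j=1$ term and the additive $2^k$. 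Everything after that is the bookkeeping‑heavy but routine binomial algebra already rehearsed in the proofs of Theorems~\ref{m1} and \ref{2 best nodes}.
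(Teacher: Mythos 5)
Your proposal follows essentially the same route as the paper's proof: the identical Poisson-point-process decomposition over the first non-idle slot giving $\aveslotsQ{\infty}{Q}(p_e)=\sum_{i,k}e^{-ip_e}\frac{p_e^k}{k!}(\EXQ{k}{Q}+i)$, the same one-step binomial-split recursion for $\EXQ{k}{Q}$ with the three cases (self-referential $j\in\{0,k\}$, the $j=1$ success followed by a sure collision among the remaining $k-1$ nodes, and the $2\le j\le k-1$ sub-collision), and the same base cases $\EXQ{1}{Q}=\aveslotsQ{\infty}{Q-1}(p_e)$ and $\EXQ{2}{Q}=3+\aveslotsQ{\infty}{Q-2}(p_e)$. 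The ``refresh to a fresh rate-$1$ Poisson process'' point you single out as delicate is precisely the step the paper disposes of implicitly via the memoryless/independence property of Poisson counts on disjoint intervals, so your plan is a faithful (and slightly more self-aware) rendering of the paper's argument.
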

\begin{proof}
The proof is given in Appendix~\ref{proof of Q best nodes}.
\end{proof}
A non-recursive expression for $\aveslotsQ{\infty}{Q}(p_e)$ for $Q>2$
along the lines of~\eqn{eq:m2} of Theorem~\ref{m1} and \eqn{aveslots2}
of Theorem~\ref{2 best nodes} can be derived.  However, the Markov
chains become more involved.
%
%
%

\subsection{Results for $Q$ Best Relay Selection}

\begin{figure}[p]
  \centering \includegraphics[width=0.8\columnwidth,
  keepaspectratio]{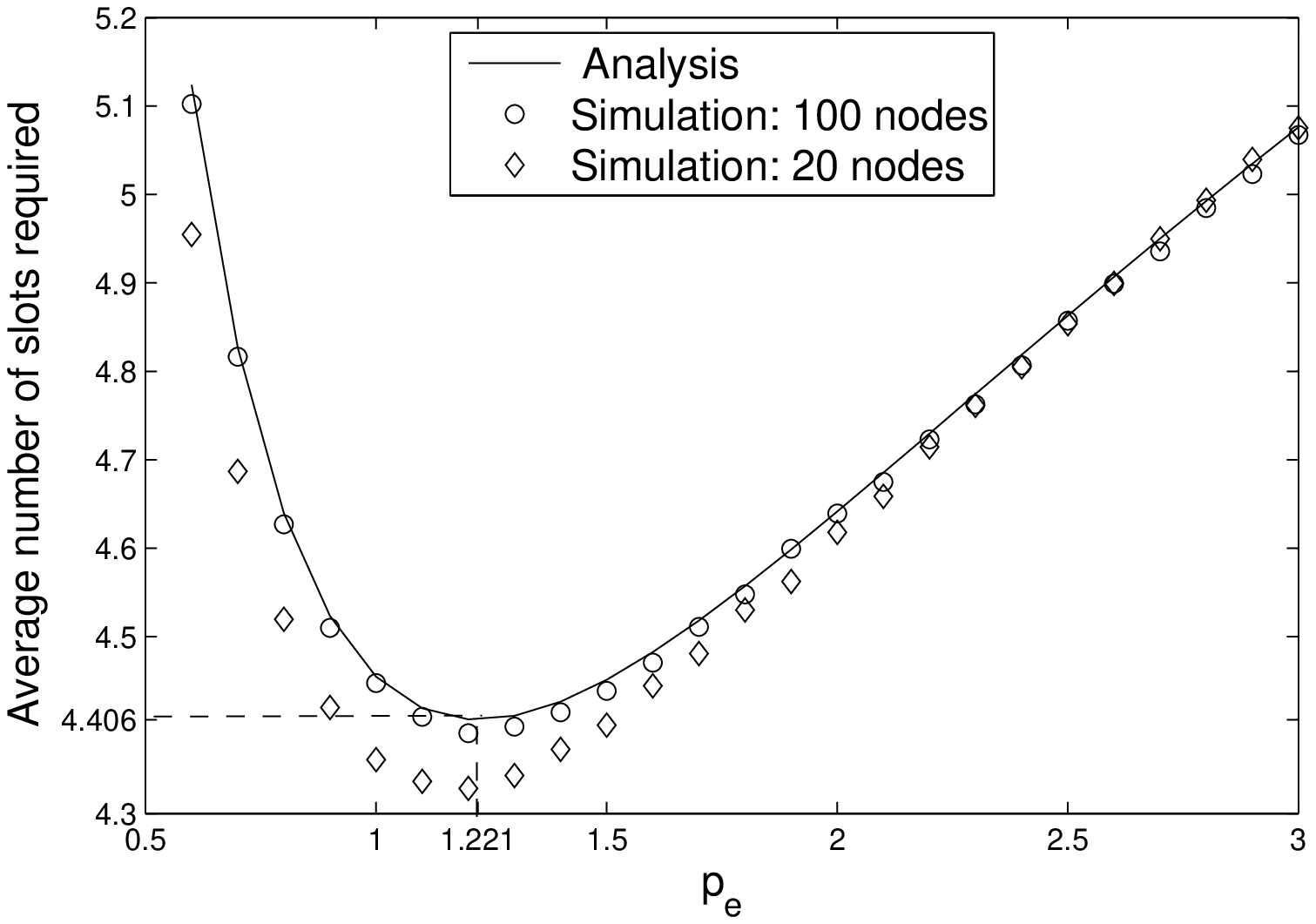}
\caption{Average number of slots required to select the  best two nodes~$(\aveslotsQ{\infty}{2}(p_e))$ as a function $p_e$.}
\label{fig:2_best_nodes}
\end{figure}

Figure~\ref{fig:2_best_nodes} plots $\aveslotsQ{\infty}{2}(p_e)$ as a
function of $p_e$ using Theorem~\ref{2 best nodes} and verifies it
using Monte Carlo simulations. It can be seen that the asymptotic
expressions are accurate even for a small number of nodes, e.g., $n =
20$. The lowest average number of slots required to select two users
is $4.406$, which occurs at $p_e=1.221$.  This is 10.7\% faster than
running the single relay selection algorithm twice, which requires $2
\times 2.467 = 4.934$ slots. The increase in the optimal $p_e$ from
$1.088$ slots for $Q = 1$ to $1.221$ slots for $Q = 2$ occurs because
now it is faster to resolve a collision than to avoid it.
Specifically, the time taken to select two nodes given that they are
involved in a collision is $\EXQ{2}{2}=3.0$ slots.  Where as, the
number of slots required to select two nodes, given that the previous
slot was idle, is $4.4$ slots.

\begin{table}[p]
\renewcommand{\arraystretch}{1.2}
\caption{Optimum $p_e$ and the average number of slots required to select the best $Q$ relays}
\centering
\begin{tabular}{| c | c | c | c |}
\hline
        $Q$ & Optimum $p_e$ & Optimum $\aveslotsQ{\infty}{Q}(p_e)$ (slots) & Improvement\\
        \hline
        1 & 1.088 & 2.467 & - \\
        2 & 1.221 & 4.406 & 10.7\% \\
        3 & 1.214 & 6.491 & 12.3\% \\
        4 & 1.231 & 8.537 & 13.5\% \\
        5 & 1.236 & 10.592 & 14.1\% \\
        6 & 1.241 & 12.645 & 14.6\% \\
        \hline
\end{tabular}
\label{table:optimum p_e}
\end{table}

Table~\ref{table:optimum p_e} provides the optimum values of $p_e$ and
the average number of slots as a function of the number of relays that
need to be selected. We can see that selecting the best three nodes
takes $6.491$ slots, on average, and is achieved when $p_e =
1.214$.\footnote{The marginal decrease in the optimal value of $p_e$
  from 1.221 to 1.214 when $Q$ increases from $2$ to $3$ can be
  explained as follows.  The time taken to select three nodes after a
  collision among two nodes is $\EXQ{2}{3}=5.48$ slots. However, the
  number of slots required to select three nodes after an idle slot,
  is $6.49$ slots, which is just $17.8\%$ more than $5.48$. Therefore,
  the optimum $p_e$ decreases since the selection times after an idle
  and a collision are not as unequal as for $Q = 2$.} As $Q \tendsto
\infty$, the optimum value of $p_e$ increases to $1.266$, which is
also the optimum value maximizing the throughput of
FCFS~\cite{bertsekas_gallager}.\footnote{The maximum arrival rate of
  0.487 is supported when initial collision interval is capped at 2.6.
  This implies that there are on average $0.487 \times 2.6 = 1.266$
  nodes transmitting.  The contention parameter $p_e$ is set using
  normalized metric CCDF with an `arrival rate' equal to 1.}  Also, it
can be shown that $\frac{Q}{\aveslotsQ{\infty}{Q}(1.266)}$, which
represents the average number of users selected per slot by the
algorithm for $p_e=1.266$, increases to $0.487$ as $Q \tendsto
\infty$.


\subsection{Tackling Discrete Metrics Using Proportional Expansion}
\label{sec:discrete_metric}

\newcommand{\unif}[1]{\mathcal{U}(#1)}

The thresholding algorithms in Sec.~\ref{subsec:single relay algo} and
Sec.~\ref{subsec:Q-algo defiition} exploit the critical fact that with
probability one no two metrics are equal.  However, as mentioned in
the Introduction, when the metric has a discrete probability
distribution, the algorithms break down because the probability that
the metrics of the best two nodes are exactly equal is non-zero.  We
now provide a simple and novel distributed solution called {\em
  Proportional Expansion} to tackle this practical problem.

\begin{figure}[p]
\centering
\input{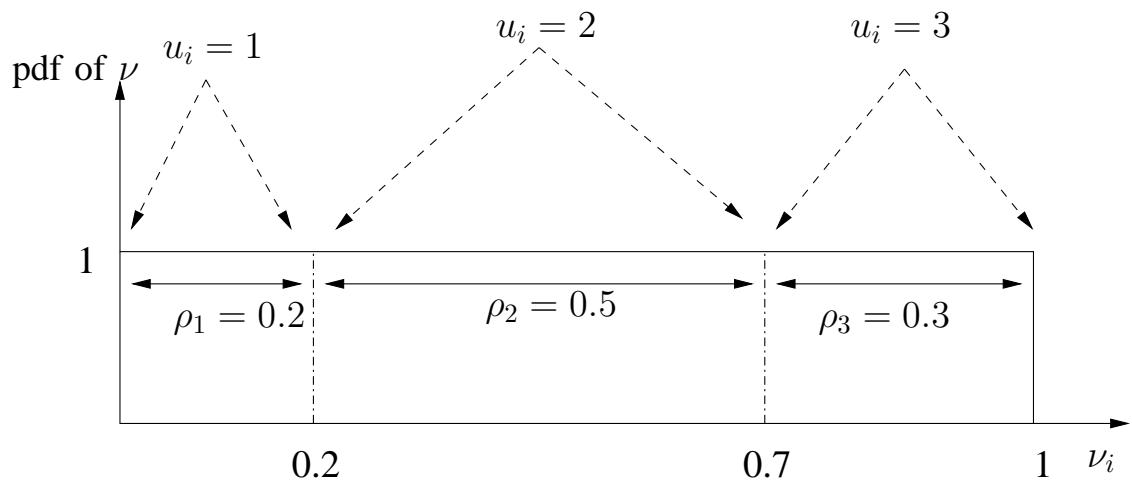}
\caption{Illustration of Proportional Expansion for discrete metrics. An example shown is for the case where the metric takes 3 values $1$, $2$, and $3$ with probabilities 0.2, 0.5, and 0.3, respectively.}
\label{fig:discrete metric}
\end{figure}

{\em Proportional Expansion}: Let the metric $u_i$ be a realization of an
$\omega$-valued discrete random variable that, without loss of
generality, takes values $1,2,\ldots,\omega$ with probability
$\rho_1,\rho_2,\ldots,\rho_{\omega}$, respectively. Each node
independently maps its metric $u_i$ into a new metric $\nu_i$ as
follows: {\em When $u_i = j$, $\nu_i$ is a realization of a uniformly
  distributed random variable in $\left(\sum_{\ell=0}^{j-1}\rho_\ell,
    \sum_{\ell=1}^{j}\rho_\ell \right)$, where $\rho_0 \define 0$.}
In other words, each node chooses a new random metric $\nu_i$ that is
uniformly distributed over a bin of length {\it proportional} to the
probability mass of its original metric $u_i$.

The overall distribution of the new metric across all users is then
uniformly distributed in $(0,1)$.  Proportional Expansion satisfies two key
properties:
\begin{itemize}
\item It preserves the sorting order of the metrics: if $u_{i} >
  u_{j}$, then $\nu_{i} > \nu_{j}$. Hence, selecting the best $Q$
  nodes with the highest $\nu_i$s is equivalent to selecting $Q$ nodes
  with the highest $u_i$s.

\item The probability that $\nu_i = \nu_j$, for $i \neq j$, is 0 since
  $\nu_i$ is a continuous random variable.
\end{itemize}

Therefore, the selection algorithm of Sec.~\ref{sec:Q-Relay Selection
  Algorithm} for any $Q \geq 1$ can then be run on $\nu_{i}$.  The
following Proposition formally quantifies the performance of
Proportional Expansion. It implies that proportional expansion is
scalable, \ie, it takes at most 2.47 slots for best relay selection,
$4.406$ slots for selecting the best 2 relays, and so on, for any
number of relays, $n$.

\begin{proposition}
\label{disc_metric}
The average number of slots required to select the best $Q$ relays by
Proportional Expansion for the discrete metrics case is the same as
that of the best $Q$ relay threshold based selection algorithm of
Sec.~\ref{subsec:Q-algo defiition} that operates on continuous
metrics.
\end{proposition}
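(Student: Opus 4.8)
The plan is to show that the stochastic process induced by running the $Q$-relay algorithm on the expanded metrics $\{\nu_i\}$ is \emph{identical in distribution} to the process studied in Sec.~\ref{sec:Q-Relay Selection Algorithm} in the asymptotic regime, so that the expected number of slots coincides term for term. The key observation is that the algorithm of Sec.~\ref{subsec:Q-algo defiition} depends on the metrics \emph{only} through the normalized quantities $y_i = n F_c(u_i)$, and, more precisely, only through which $y_i$'s fall into each threshold interval at each step. Thus it suffices to argue that, after Proportional Expansion, the corresponding normalized points have exactly the joint law used in the continuous analysis (i.i.d.\ uniform on $[0,n]$, limiting to a rate-$1$ Poisson process), and that the sorting order is preserved so that ``best $Q$'' means the same thing.

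First I would verify the two bulleted properties of Proportional Expansion rigorously: (i)~monotonicity, i.e.\ if $u_i > u_j$ then the bin for $u_i$ lies entirely above the bin for $u_j$ (immediate, since the bins $(\sum_{\ell<j}\rho_\ell,\sum_{\ell\le j}\rho_\ell)$ are disjoint and ordered by $j$), so the top-$Q$ set is unchanged; and (ii)~the unconditional law of a single $\nu_i$: conditioning on $u_i = j$ gives uniform on an interval of length $\rho_j$, and mixing over $j$ with weights $\rho_j$ yields exactly the uniform distribution on $(0,1)$, by the standard ``stacking of conditionally-uniform pieces'' argument. Since the nodes perform the mapping independently and the original $u_i$ are i.i.d., the $\nu_i$ are i.i.d.\ $\mathrm{Uniform}(0,1)$. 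Hence with $F_c^{\nu}(\nu)=1-\nu$ the normalized points $y_i = n F_c^{\nu}(\nu_i) = n(1-\nu_i)$ are i.i.d.\ $\mathrm{Uniform}(0,n)$ --- precisely the setup used in the proof of Theorem~\ref{m1}.

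Next I would make the reduction explicit: because $\{y_i\}$ are i.i.d.\ uniform on $[0,n]$ exactly as in the continuous case, the counting process $M(t)=\max\{k\ge 1: y_{[k]}\le t\}$ is binomial and converges as $n\to\infty$ to a rate-$1$ Poisson process; the $Q$-relay algorithm's sequence of threshold intervals, feedback outcomes, and the stopping time at the $Q\kth$ success are measurable functions of $M(\cdot)$ alone (this is exactly the ``runs FCFS on $M(t)$'' observation already made in Sec.~\ref{subsec:Q-algo defiition}). Therefore the distribution of the number of slots, and in particular its expectation $\aveslotsQ{\infty}{Q}(p_e)$, is the same whether the underlying points come from a genuinely continuous metric or from a discrete metric after Proportional Expansion. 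Combining this with property~(i), which guarantees the algorithm terminates having selected the correct top-$Q$ nodes, gives the claim.

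The main obstacle --- really the only non-routine point --- is justifying that the analysis carries over \emph{verbatim}, i.e.\ that nothing in the derivation of Theorems~\ref{m1}--\ref{Q best nodes} used continuity of the original metric beyond the facts that the top-$Q$ order statistics are a.s.\ distinct and that the normalized points are i.i.d.\ uniform (equivalently Poisson in the limit). One must check that the split function and the threshold updates in Sec.~\ref{subsec:Q-algo defiition} never reference the metric distribution except through $F_c^{\nu}$, which for the expanded metric is linear, so $\hsplit{a,b}$ just bisects intervals in the $y$-domain exactly as assumed. With that verified, the expected-slot expressions are literally the same functions of $p_e$ and $Q$, and the proposition follows.
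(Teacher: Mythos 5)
Your proposal is correct and follows essentially the same route as the paper, which omits a formal proof on the grounds that it "directly follows from the above discussion" --- namely the two properties you verify (order preservation and the i.i.d.\ $\mathrm{Uniform}(0,1)$ law of the expanded metrics $\nu_i$), from which the continuous-metric analysis applies verbatim. Your write-up simply makes explicit the reduction the authors leave implicit.
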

\begin{proof}
The proof is omitted since it directly follows from the above discussion.
\end{proof}
%


%

\section{Conclusions}
\label{sec:conclusions}

We developed a new asymptotic analysis for the single relay splitting
based selection algorithm, which was based on a new Poisson point
process interpretation of the dynamics of the algorithm. This led to a
characterization of the optimal parameters of the algorithm, and
enabled a rigorous benchmarking of the greedy parameter setting used
in the literature. We also proposed a new splitting based algorithm
for selecting the best $Q$ relays, which are useful for several
cooperative protocols proposed in the literature. The new algorithm
was more efficient than running the single relay selection algorithm
multiple times. Furthermore, we generalized the analytical techniques
to handle multiple relay selection, and derived the exact expressions
for the average number of slots for multiple relay selection.
Interestingly, the asymptotic expressions were accurate even for a
small number of relays. With the help of proportional expansion, we
showed, for the first time, that splitting algorithms can be adapted
to work for discrete metrics as well without any loss in performance
or scalability whatsoever.

The analysis shows that the greedy policy of maximizing the success
probability in the next slot is suboptimal. While it works well for
single relay selection, it becomes more and more suboptimal as the
number of relays to be selected increases.  The analysis also shows
that the general single relay selection algorithm, the proposed
multiple relay selection algorithm, and the FCFS multiple access
control algorithm are intimately related. For example, the optimal
value of the contention load parameter increases as the number of
relays to be selected increases and finally approaches the optimal
setting for FCFS.  This is despite the fact that selection and
multiple access control algorithms serve very different purposes, and,
therefore, evaluated differently. While multiple access control
algorithms attempt to serve all nodes and are evaluated, for example,
by the maximum traffic they can handle with a finite delay, selection
algorithms are evaluated by how fast they can select the best nodes.
We hope that this insight will help develop better selection
algorithms.  An important property about splitting algorithms is that
besides being distributed, they are both extremely fast and scalable.
This suggests that selection based protocols will deliver improvements
in the overall end-to-end system-level performance even when the time
overhead incurred by the selection algorithm is accounted for. The
system-level benefits can be further improved if the multiple relay
selection algorithm proposed in this paper can be modified to allow
the metrics to be updated during the selection process.

\appendix

\subsection{Proof of Lemma~\ref{average_no_of_slots_finite_n}}
\label{proof of average_no_of_slots_finite_n}
It can be easily seen that the idle phase consists of at the most $q =
\ceil{\frac{n}{p_e}}-1$ slots since at this stage the lower threshold
equals the smallest value $0$.  Given that the first non-idle slot is
the $i\kth$ slot and $k$ nodes are involved, the average number of
slots required to find the best node is $\EX{k}+i$. (The recursive
expression for $\EX{k}$ is given in~ \cite[(6)]{qin_infocomm_2004}.)
The probability that the first non-idle slot is the $i\kth$ slot and
$k$ nodes transmit in it equals $
\binom{n}{k}\left(\frac{p_e}{n}\right)^{k}
\left(1-\frac{ip_e}{n}\right)^{n-k}$, for $i \le q$. This constitutes
the first term of the right side of~\eqn{eq:finite_nodes}.  The
probability that the $(q+1)\kth$ slot is the first non-idle slot is
$(1-\frac{qp_e}{n})^n$ since all nodes' metrics must lie in interval
$((q+1)p_{e},1]$. In the event that this happens, all $n$ nodes will
transmit and collide, which will take $\EX{n}$ slots to resolve.
Hence, the second term on the right side of~\eqn{eq:finite_nodes}
follows.

\subsection{Proof of Non-Recursive Expression of Theorem~\ref{m1}}
\label{proof of m2}
  Let the random variable $I$ denote the number of slots required until (and including) the first non-idle slot and $Y$ denote the number of slots required after that.

  Consider the state transition diagram of Figure~\ref{fig:markov
    chain}, in which the state represents the number of slots that
  have elapsed since the first non-idle slot. The node goes to state
  $\success$ whenever success occurs, and the algorithm terminates.
  Otherwise, in case of an idle or collision, the node increments its
  state by 1. By definition, state~$0$ is the first non-idle slot
  itself; thus, an idle outcome cannot occur in it. The following
  lemma is crucial in analyzing this transition diagram.
\begin{lemma}
The state transition diagram of Fig.~\ref{fig:markov chain} is a Markov chain.
\label{lem:markov}
\end{lemma}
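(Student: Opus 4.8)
The plan is to show that the future evolution of the state-transition diagram depends on the past only through the current state, and this reduces to showing that conditioned on being in state $j$ (i.e., $j$ slots have elapsed since the first non-idle slot, with no success yet), the number of points of the Poisson process $M(t)$ still lying in the current threshold interval is itself a fresh random variable whose distribution depends only on $j$ (through the current interval length $2^{-j}p_e$), and not on the particular sequence of idles and collisions that led to state $j$. First I would set up notation: by the construction in step~2 and step~3 of the response-to-feedback rules, after the first non-idle slot (a collision in an interval of length $p_e$ known to contain $\geq 2$ points) each subsequent slot halves the active interval, so in state $j$ the active threshold interval has length $2^{-j}p_e$ — here I use that in the single-relay algorithm, once a collision has occurred, we only ever recurse on the lower sub-interval that provably contains the best (lowest-$y$) node.

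Next I would invoke the key structural fact coming from the Poisson interpretation in the proof of Theorem~\ref{m1}: the points $y_{[k]}$ form a rate-$1$ Poisson process on $[0,\infty)$ as $n\to\infty$, so the number of points in disjoint intervals are independent, and the number of points in an interval of length $\ell$ is Poisson$(\ell)$ \emph{regardless of what is known about other intervals}. Concretely, reaching state $j$ is an event describable entirely in terms of point counts in intervals that are disjoint from the current active interval (the interval explored in slot $0$ contained the first point; each idle told us an upper sub-interval was empty; each collision told us a lower sub-interval had $\geq 2$ points but the residual information about \emph{which} points and \emph{how many} is carried forward). The careful bookkeeping step is to argue that, conditioned on the whole history up to entering state $j$, the count of points in the current active interval of length $2^{-j}p_e$ is distributed as Poisson$(2^{-j}p_e)$ \emph{conditioned on being $\geq 2$} — the ``$\geq 2$'' coming from the most recent collision — and this conditional law depends on the history only through $j$. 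Given that, the outcome of slot $j$ (idle / success / collision) and hence the transition to state $S$, state $j+1$ (via collision or via idle), is a function of this count alone, so the Markov property follows. The transitions then are: from state $j$, success with probability $\PI{j}$ (exactly one point in the left half, given $\geq 2$ in the whole), idle-or-collision leading to state $j+1$ otherwise; I would note state $0$ is special (the interval is known non-empty, not known to have $\geq 2$, and an idle is impossible there), giving the $P_0$ term.

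The main obstacle I expect is making the ``history depends only through $j$'' claim fully rigorous: one must show that when a collision occurs in slot $j$, the information passed to slot $j+1$ is precisely ``$\geq 2$ points in the new (halved) interval'' and nothing more refined — i.e., there is no leftover conditioning from earlier collisions that would bias the count in a way depending on \emph{how many} earlier collisions occurred. This works because each collision's interval is nested inside the previous one, so the constraint ``$\geq 2$ in interval $I_{j-1}$'' is weaker than and implied by ``$\geq 2$ in $I_j \subset I_{j-1}$'' once we learn the latter; and the complementary region $I_{j-1}\setminus I_j$ having a Poisson count independent of $I_j$ means no information leaks back. I would phrase this as an induction on $j$: assume entering state $j$ the count $N_j$ in $I_j$ is Poisson$(2^{-j}p_e\mid \geq 2)$; condition on the outcome (if collision, $N_{j+1}$ is the count in the left half $I_{j+1}$, which by independence of the two halves and the thinning property of Poisson is Poisson$(2^{-(j+1)}p_e)$ conditioned on $\geq 2$, closing the induction; if idle in the left half, all $N_j \geq 2$ points are in the right half $I_{j+1}$, which is again $I_j$ halved, and the count there is $N_j$ which is Poisson conditioned on $\geq 2$ — but now we must re-derive that the \emph{right} half's count, given the left is empty and the total is $\geq 2$, is Poisson$(2^{-(j+1)}p_e \mid \geq 2)$, which follows from the same splitting identity). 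Once this induction is in place the Markov property is immediate, and the explicit transition probabilities $P_0$, $\PI{i}$ can be read off by the elementary Poisson computations already implicit in the statement of Theorem~\ref{m1}.
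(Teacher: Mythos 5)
Your proposal is correct and follows essentially the same route as the paper: reduce the Markov property to showing that the success-transition probability out of state $i$ depends only on $i$, and compute it as a conditional Poisson count probability ($P_0$ and $\PI{i}$) using independence of counts on disjoint halves of the nested threshold intervals. Your explicit induction on the state index (showing that, given the entire history, the count in the current ``known to contain $\geq 2$'' interval is Poisson conditioned on $\geq 2$, in both the collision and the idle branches) actually makes rigorous the step the paper only argues informally for the two trajectories into state $2$; the only blemish is a harmless off-by-one in which interval (the examined half of length $2^{-i}p_e$ versus its parent of length $2^{-(i-1)}p_e$) carries the ``$\geq 2$'' constraint, which does not affect the substance of the argument.
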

\begin{proof}
To prove this, it is sufficient to prove that the
  transition probability from any state $i$ to $\success$ is dependent
  only on $i$. (Having done so, we shall denote this probability by $\PI{i}$.)

  We refer to the interval in $M(t)$ allocated to state $i$ as its
  threshold interval.  Here, $P_{0}$ is the probability that in a
  threshold interval of size $p_e$ only one node transmits given that
  at least one node transmits in that slot. Let $N(x)=M(t+x)-M(t)$.
  Then, from the memoryless property of the Poisson process,
  $\prob{N(x)=i}$ is independent of $t$ and is equal to $\frac{x^i
    e^{-x}}{i!}$. Thus,
 \begin{align}
 P_0 =\prob{N(p_e)=1 \Big| N(p_e) > 1} = \frac{p_e e^{-p_e}}{1-e^{-p_e}}.
 \end{align}
 $\PI{1}$ is the probability that the second non-idle slot is a
 success given that the first non-idle slot (of threshold interval
 size $p_e$) is a collision. Due to splitting, the second slot will
 have a threshold interval size that is half that of the first one.
 Therefore, $\PI{1}$ is the probability that conditioned on $N(p_e)$
 having at least 2 nodes (\ie, a collision), $N(p_e/2)$ has exactly
 one. Thus,
  \begin{equation}
  \PI{1}=\prob{N\left(\frac{p_e}{2}\right)=1 \Big| N(p_e) \geq 2}
 =\frac{\prob{N\left(\frac{p_e}{2}\right)=1 , N(p_e) \geq 2}}{Pr(N(p_e) \geq 2)}.
\end{equation}
Therefore,
\begin{equation}
            \PI{1} = \frac{\prob{N\left(\frac{p_e}{2}\right)=1 , N(p_e)- N\left(\frac{p_e}{2}\right)\ge1}}{Pr(N(p_e) > 1)}
             = \frac{ \frac{p_e}{2} e^{-\frac{p_e}{2} }(1-e^{-\frac{p_e}{2} })}{1-(1+p_e )e^{-p_e }}.
  \end{equation}

  For $\PI{2}$, the following two trajectories can occur: State 2 was
  reached by a collision in state 1 or by an idle in state 1. In case
  of a collision, the threshold interval of the second non-idle slot
  (of size $p_e/2$) gets split into two halves. Even in the case of an
  idle the interval would be split into two halves and nodes from the
  left half would contend.  Thus, $\PI{2}$ is equal to the probability
  that conditioned on an interval of size $p_e/2$ having at least two
  nodes, half the interval (of size $p_e/4$), has exactly one node. Thus,
  \begin{equation*}
  \PI{2}= \prob{N\left(\frac{p_e}{4}\right)=1 \Big| N\left(\frac{p_e}{2}\right) \geq 2}
             = \frac{\prob{N\left(\frac{p_e}{4}\right)=1 , N\left(\frac{p_e}{2}\right) \geq 2}}{Pr(N(p_e) \geq 2)} = \frac{\frac{p_e}{4}
  e^{-\frac{p_e}{4} }(1-e^{-\frac{p_e}{4} })}{1-(1+\frac{p_e}{2}
  )e^{-\frac{p_e}{2} }}.
\end{equation*}

In similar way, we can show that $\PI{i},~\forall~i\ge 1$, is equal to
the probability that conditioned on an interval of size
$2^{-(i-1)}p_e$ having at least two nodes, one half of the interval (of size
$2^{-i}p_e$) has exactly one node. Thus,
    \begin{equation}
  \PI{i}= \prob{N\left(\frac{p_e}{2^{i}}\right)=1 \Big| N\left(\frac{p_e}{2^{i-1}}\right) > 1}
             = \frac{2^{-i}p_ee^{-2^{-i}p_e}(1-e^{-2^{-i}p_e})}{1-(1+2^{-(i-1)}p_e)e^{-2^{-(i-1)}p_e}}.
  \end{equation}
\end{proof}

Now, $\aveslots{\infty}(p_e)=\expect{I}+\expect{Y}$. From the Poisson
process interpretation of Theorem~\ref{m1}, we can show that
$\prob{I=i} = e^{-(i-1)p_e}(1-e^{-p_e})$. Therefore, $\expect{I} =
\sum_{i=1}^\infty ie^{-(i-1)p_e}(1-e^{-p_e}) = \frac{1}{1-e^{-p_e}}$.
The average number of slots required after the first non-idle slot to
select the best node, $\expect{Y}$, is calculated as follows.  First,
\mbox{$\expect{Y} = \sum_{i=1}^\infty i \Pr(Y=i)$}, can be shown to be
identically equal to $\sum_{i=1}^\infty \Pr(Y \geq i)$.  Second, since
each state in the Markov chain is visited at most once, it follows
that $\expect{Y} = \sum_{i=1}^\infty p(i)$, where $p(i)$ is the
probability that the $i\kth$ state is visited.  From the state
transition diagram, it is easy to see that \mbox{$p(i)=
  (1-\PI{0})\prod_{j=1}^{i-1} (1-\PI{j})$ }. Hence, the desired
expression for $\aveslots{\infty}(p_e)$ follows.

\begin{figure}[p]
\centering
\input{markov_single.pstex_t}
\caption{State transition diagram for the number of slots required to select the best node after the first non-idle slot.}
\label{fig:markov chain}
\end{figure}

\subsection{Proof of Corollary~\ref{upperbound}}
\label{proof of upperbound}

From~\cite{qin_infocomm_2004}, we have $\EX{k} \le \log_2(k)+1$, $k
\geq 2$, and $\EX{1}=0$. Since $\log_2(x)$ is concave with respect to
$x$, a tangent to it at any point $(k_0,\log_{2}(k_0))$ is an upper
bound. Therefore, 
\begin{equation}
\log_2(k) \le \frac{k-k_0}{k_0\log_{e}(2)} +
\log_2(k_0).  
\end{equation}
Consequently, $\EX{k} \le \frac{k}{k_{0} \log_{e}(2)} + \log_{2}(2
k_{0}/e)$, $k \geq 2$. Substituting this in~\eqn{eq:m1}, we get
\begin{align}
 \aveslots{\infty}(p_e)  \leq \log_{2}\brac{\frac{2 k_{0}}{e}} \frac{\sum_{k=2}^{\infty} \frac{p_{e}^{k}}{k!}}{e^{p_{e}} - 1} + \frac{1}{k_{0} (e^{p_{e}} - 1) \log_{e}(2)} \sum_{k=2}^{\infty} k \frac{p_{e}^{k}}{k!} + \frac{1}{1 - e^{-p_{e}}}.
\label{eq:ub}
\end{align}

For $k_0 \geq e/2$, $\log_{2}(2 k_{0}/e) \geq 0$. Also,
$\sum_{k=2}^{\infty} \frac{p_{e}^{k}}{k!} = e^{p_e}-1-p_e < e^{p_e}-1$
since $p_{e} > 0$. Therefore, for $k_0 \geq e/2$, the first term in
the right hand side of~\eqn{eq:ub} is less than $\log_{2}\brac{\frac{2
    k_{0}}{e}}$. Substituting this inequality in~\eqn{eq:ub} and
simplifying leads to the desired result in~\eqn{eq:upperbound}.


\subsection{Proof of Theorem~\ref{2 best nodes}}
\label{proof of 2 best nodes}

{\it Proof of~\eqn{aveslots2}}: Given that the first non-idle slot is
the $i\kth$ slot and $k\ge1$ nodes are involved, the average number of
slots required to select the best $2$ nodes is $\EXQ{k}{2}+i$.  The
probability that the first non-idle slot is the $i\kth$ slot and
$k\ge1$ nodes are involved is $e^{-ip_e} p_e^k/k!$.  Hence, we get
\begin{equation}
\aveslotsQ{\infty}{2}(p_e) = \sum_{i=1}^\infty\sum_{k=1}^\infty e^{-ip_e}\frac{p_e^k}{k!} \left(\EXQ{k}{2}+i\right),
\end{equation}
simplifying which yields~\eqref{aveslotsQ}.

If only one node transmits in the first non-idle slot, then a success
occurs and the node gets selected. Selecting one more node will take
$\aveslotsQ{\infty}{1}(p_e)$ slots, on average. (This follows from the
memoryless property of the Poisson process~\cite{wolff}.) Thus,
$\EXQ{1}{1}=\aveslotsQ{\infty}{1}(p_e)$.  Also, if exactly two nodes
transmit in the first non-idle slot, only one node transmits in the
slot just after the first success. Thus, $\EXQ{2}{2}=\EXQ{1}{2}+1=3$
slots.  When $k>3$ nodes transmit in the first non-idle slot, the
following three cases are possible for the next slot: (i)~{\it
  Collision among $i$ nodes}: $\EXQ{i}{2}$ more slots would then be
required, on average.  (ii)~{\it Idle}: $\EXQ{k}{2}$ more slots are
required, on average. (iii)~{\it Success}: The next slot would then
surely involve a collision among $k-1$ nodes.  $\EXQ{k-1}{1}$ slots,
on average, would be required {\it after} that.  The probability that
$i$ nodes transmit in the next slot is $\binom{k}{i}/2^{k}$. Thus,
\begin{multline} \label{collision 2 users}
\EXQ{k}{2}=\frac{1}{2^{k}}\Bigg(\left(\binom{k}{0}+\binom{k}{k}\right)\left(1+\EXQ{k}{2}\right) \\
+ \binom{k}{1}\left(1+\EXQ{k-1}{1}+1\right) +  \sum_{i=2}^{k-1}\!\! \binom{k}{i}\left(1+\EXQ{i}{2}\right)\Bigg).
\end{multline}
Simplifying this further using combinatorial
identities~\cite{gradshteyn00_book} results in~\eqref{EX_k[2]}.

{\it Proof of~\eqn{aveslots2_2}:} This proof also involves
constructing a state transition diagram that will be proved to be a
Markov chain. Consider the state transition diagram of
Figure~\ref{fig:Markov2Users}. It is more involved than that in
Figure~\ref{fig:markov chain} because we need to also track how many
successes have occurred.  State $i$ corresponds to the $i\kth$ split
before the first success (which takes $i$ slots), state $i'$
corresponds to the first success occurring at the $i$th slot, and
state $i''$ corresponds to the first success having already occurred
by the $i$th slot. The state transition diagram can be explained in
detail as follows.

State~$0$ corresponds to the first non-idle slot. If the first
non-idle slot is a success, the node moves from state~$0$ to
state~$\success_1$. Now, the algorithm starts a new collision
resolution to find the second colliding node. This takes time
$\aveslotsQ{\infty}{1}(p_e)$, which is given by Theorem~\ref{m1}. If
the first non-idle slot is a collision, its threshold interval is
split and the node transitions from state~$0$ to state~$1$. Each
subsequent idle or collision results in one additional split and the
node moves from state~$i$ to~$i+1$.  In case of a success, the node
moves from from state~$i$ to state~$i'$ as no additional split occurs.
A success in state~$i'$ results in a transition to state~$\success$,
at which time the algorithm terminates.  In case of a collision in
state~$i'$, the node moves to state~$(i+1)''$, as one more split
occurs.  In case of a success in state $i''$, the node moves to state
$\success$, and the algorithm terminates. Otherwise, an idle or
collision results in a transition from state~$i''$ to $(i+1)''$.  Note
that in each state ($i$, $i'$, or $i''$) the size of threshold
interval is $2^{-i}p_e$.

The following Lemma shall prove to be crucial in analyzing this
transition diagram.
\begin{lemma}
The state transition diagram of Fig.~\ref{fig:Markov2Users} is a Markov chain.
\label{lem:markov2users}
\end{lemma}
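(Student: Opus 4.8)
The plan is to mirror the strategy used for Lemma~\ref{lem:markov}: the state transition diagram is a Markov chain if and only if, from each state, the probabilities of the outgoing transitions depend only on that state (and not on the history of how the state was reached). Since the diagram of Fig.~\ref{fig:Markov2Users} has three families of states --- the unprimed states $i$ (no success yet), the primed states $i'$ (first success occurred exactly at slot $i$), and the double-primed states $i''$ (first success occurred before slot $i$) --- I would verify the claim family by family. The key structural fact, stated in the paragraph preceding the lemma, is that in every state labelled $i$, $i'$, or $i''$ the associated threshold interval of the underlying Poisson process $M(t)$ has length $2^{-i}p_e$; I would take this as the starting point and justify it by induction on $i$ using the algorithm's splitting rules (a collision halves the interval; an idle after a collision halves it; a success leaves its length unchanged but the algorithm then works on the complementary half of the same size).

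First I would handle the unprimed states. Reaching state $i$ means the first non-idle slot was a collision and there followed a chain of $i-1$ idles/collisions, so the threshold interval of state $i$ has length $2^{-i}p_e$ and, crucially, is known to contain at least two of the $y_j$'s (this is exactly the conditioning that defines $\PLi{i}$ in Theorem~\ref{2 best nodes}). By the memoryless property of the rate-$1$ Poisson process, conditioned on an interval of length $2^{-(i-1)}p_e$ containing $\ge 2$ points, the number of points in its left half of length $2^{-i}p_e$ has a distribution that depends only on $i$; hence $\Pr(\text{success in state }i)=\PLi{i}$ and $\Pr(\text{idle or collision in state }i)=1-\PLi{i}$, independent of history. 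The same argument, applied to the right half conditioned on the left half being empty, gives that the transition out of an idle-after-collision configuration is governed by $\PRi{i}$ --- this is where the $\PRi{i}$ quantities enter. Next, for a primed state $i'$: arriving there means a success just occurred in an interval of length $2^{-i}p_e$ whose complementary (right) half of the same length is known to be nonempty (it held the remaining colliding node(s)); conditioned on that, the probability that this right half holds exactly one point is a function of $i$ alone, namely $\PRi{i}$, giving the transition $i' \to \success$ versus $i' \to (i+1)''$. Finally, for a double-primed state $i''$: the first success is already done, the node is now resolving the collision interval of length $2^{-i}p_e$ known to contain $\ge 2$ points, so --- exactly as for the single-relay chain of Lemma~\ref{lem:markov} --- the probability of a success in its left half depends only on $i$, and is again $\PLi{i}$.

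The main obstacle I anticipate is not any single probability computation but rather the bookkeeping needed to show that the \emph{conditioning events are exactly what the state label records} --- i.e., that two different sample paths landing in the same state induce the same conditional law on the future increments of $M(t)$. Concretely, I must argue that whenever the algorithm enters state $i'$ or $i''$, the only information the past has injected about the Poisson points lying to the right of the current threshold is the "$\ge 1$ point" or "$\ge 2$ points" predicate on an interval of the known length $2^{-i}p_e$, with the points in all not-yet-examined intervals still fresh. This is a direct consequence of the independent-increments and memoryless properties of the Poisson process together with the fact that the algorithm's thresholds at slot $k$ are measurable functions of the feedback from slots $1,\dots,k-1$ only; I would state this explicitly and then the per-state probabilities follow as above, matching the $P_0$, $\PLi{i}$, $\PRi{i}$ formulas already recorded in Theorem~\ref{2 best nodes}. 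Once the transition probabilities are shown to be state-dependent only, the diagram is by definition a Markov chain, completing the proof.
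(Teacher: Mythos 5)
Your proposal takes essentially the same route as the paper: reduce the lemma to showing that the outgoing transition probabilities of the three state families depend only on the index $i$, and then compute $P_0$, $\PLi{i}$ (for states $i$ and $i''$) and $\PRi{i}$ (for states $i'$) from the memoryless and independent-increments properties of the rate-one Poisson process; your explicit remark that one must also check that the state label captures \emph{all} the information the past reveals about the not-yet-examined points is left implicit in the paper and is a worthwhile addition. One sentence is misplaced, though: the transition out of an idle-after-collision configuration at an unprimed state $i$ is \emph{not} governed by $\PRi{i}$ --- an idle and a collision both send the chain to unprimed state $i+1$, whose success probability is $\PLi{i+1}$, since the conditioning ``left half empty, parent interval has at least two points'' reduces by independent increments to ``right half has at least two points''; $\PRi{i}$ enters only at the primed states $i'$, where the right half is known to contain at least one (rather than at least two) point, exactly as you state correctly in your next sentence.
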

\begin{proof}
For this, it is sufficient to prove that transition probabilities for each state depend only on the $i$ and not on
the path taken to reach that state.

Let $ P_{0}$ be the probability of success in state~$0$ (the first
non-idle slot). It is equal to the probability that in a slot of size
$p_e$ only one node transmits given that at least one node transmits
in that slot. Let $N(x)=M(t+x)-M(t)$. Then, by memoryless property of
the Poisson process, $\prob{N(x)=i}$ is independent of $t$ and is equal to
$\frac{x^i e^{-x}}{i!}$. Thus,
 \begin{equation}
 P_0= \prob{N(p_e)=1 \Big| N(p_e) > 1}
    =\frac{p_e e^{-p_e}}{1-e^{-p_e}}.
 \end{equation}
Let $\PLi{i}$ be the probability of success in state $i$, which is equal to the probability
that given an interval of size $2^{-i}p_e$
having more than one nodes, left half of it has exactly one node.  Thus,
 \begin{equation}\label{eq:PLi}
 \PLi{i}= \prob{N\left(\frac{p_e}{2^{i}}\right)=1 \Big| N\left(\frac{p_e}{2^{i-1}}\right) > 1}
             = \frac{2^{-i}p_ee^{-2^{-i}p_e}(1-e^{-2^{-i}p_e})}{1-(1+2^{-(i-1)}p_e)e^{-2^{-(i-1)}p_e}}.
 \end{equation}
 Let $\PRi{i}$ be the probability of success in state $i'$. State~$i'$
 can be entered only after success in state~$i$. Thus, threshold
 interval of state~$i$, which is right half of the split during
 state~$i-1$, has {\it at least} one node. Thus $\PRi{i}$ is equal to
 the probability that exactly one node transmits in the slot with
 interval size $2^{-i}p_e$, given that at least one node lies in that
 interval, which equals
 \begin{equation}
 \PRi{i}= \prob{N\left(\frac{p_e}{2^{i}}\right)=1 \Big| N\left(\frac{p_e}{2^{i}}\right) > 1}
             = \frac{2^{-i}p_ee^{-2^{-i}p_e}}{1-e^{-2^{-i}p_e}}.
\label{eq:PRi}
 \end{equation}

 The probability of success in state $i''$ is again equal to the
 probability that given that an interval of size $2^{-i}p_e$ has more
 than one node, its left half has exactly one node. This probability
 equals $\PLi{i}$. Thus, from~\eqn{eq:PLi} and \eqn{eq:PRi}, the
 transition probabilities $\PLi{i}$ and $\PRi{i}$ only depend on $i$,
 which proves that Fig.~\ref{fig:Markov2Users} is a Markov chain.
\end{proof}

 Let the random variable $I$ denote the number of slots required until (and including) the first non-idle slot and $Y$ denote the number of slots required after that. Then, $\aveslotsQ{\infty}{2}(p_e)=\expect{I}+\expect{Y}$. Again, using Poisson point process interpretation,
   $\prob{I=i} = e^{-(i-1)p_e}(1-e^{-p_e})$, which implies,
\begin{equation}
\expect{I} = \sum_{i=1}^\infty ie^{-(i-1)p_e}(1-e^{-p_e}) =
  \frac{1}{1-e^{-p_e}}. 
\end{equation}

$\expect{Y}$ can be calculated from Lemma~\ref{lem:markov2users} as
follows.  Let $p(i)$, $p'(i)$, and $p''(i)$ be the probability that
states~$i$, $i'$, and $i''$ are visited, respectively. From the state
transition diagram, since state~$i$ can be reached only from
state~$i-1$ and state~$i'$ can be reached only from state~$i$, we get
$ p(i)= (1-P_0)\prod_{j=1}^{i-1} (1-\PLi{j}),\forall i\ge1,$ and $
p'(i)=p(i)\PLi{i} ,\forall i\ge1,$.  Also, since for $i>2$ state~$i''$
can be reached from $(i-1)'$ and $(i-1)''$, we get
\begin{equation}
p''(i) = p'(i-1)(1-\PRi{i-1}) + p''(i-1)(1-\PLi{i-1}),\quad\forall~i > 2,
\end{equation}
and  
\begin{equation}
p''(2)=p'(1)(1-\PRi{i-1}).
\end{equation}

Now, if state $\success_1$ is visited $\aveslotsQ{\infty}{1}(p_e)$
slots, on average, are required, which occurs with probability $P_0$.
Else, the average number of slots is equal to
$\sum_{j=1}^{\infty}\prob{Z\ge j}$, where $Z$ is the total number of
states visited excluding state~$0$. This is so because we are counting
the number of slots required {\it after} the first non-idle slot.
Since each state is visited at most once, the average above is equal
to $\sum_{i=1}^{\infty}\left(p(i)+p'(i)+p''(i)\right)$.  Thus, the
average number of slots required after the first non-idle slot is
$\expect{Y} = P_0
\aveslotsQ{\infty}{1}(p_e)+\sum_{i=1}^{\infty}\left(p(i)+p'(i)+p''(i)\right)$.

\begin{figure}[p]
\centering
\input{Markov2Users.pstex_t}
\caption{ State transition diagram for the number of slots required to select the best two nodes after the first non-idle slot.}
\label{fig:Markov2Users}
\end{figure}

\subsection{Proof of Theorem~\ref{Q best nodes}}
\label{proof of Q best nodes}
The proof is similar to the proof of~\eqn{aveslots2} in Theorem~\ref{2
  best nodes}, except for the following differences:
\begin{enumerate}

\item When two nodes transmit in the first non-idle slots,
  $\EXQ{2}{2}=3$ slots, on average, are required to select both of
  them.  Selecting the remaining best $Q-2$ nodes takes another
  $\aveslotsQ{\infty}{Q-2}(p_e)$ slots, on average.  Thus,
  $\EXQ{2}{Q}=\EXQ{2}{2}+\aveslotsQ{\infty}{Q-2}(p_e)$.

\item When $k>3$ nodes transmit in the first non-idle slot, the
  average number of slots required thereafter is
\begin{multline}
\EXQ{k}{Q}=0.5^k \Bigg[\left(\binom{k}{0}+\binom{k}{k}\right)\left(1+\EXQ{k}{Q}\right) \\
+ \binom{k}{1}\left(1+\EXQ{k-1}{Q-1}+1\right) +  \sum_{i=2}^{k-1} \binom{k}{i}\left(1+\EXQ{i}{Q}\right)\Bigg].
\end{multline}
\end{enumerate}

\bibliographystyle{ieeetr}
\bibliography{../../../Bibtex/bibJournalList,../../../Bibtex/cooperativeComm,../../../Bibtex/database,../../../Bibtex/lognormal,../../../Bibtex/MIMO,../../../Bibtex/mimoEstimation,../../../Bibtex/cdma,../../../Bibtex/adaptation,../../../Bibtex/scheduling,../../../Bibtex/standard,../../../Bibtex/book}

\end{document}